\documentclass[envcountsect,envcountsame]{llncs}
\usepackage{amsmath,amssymb,bbm,graphicx}
\usepackage{enumerate,verbatim}
\usepackage{tikz}
\usepackage{epic,eepic,phgraphs}
\usepackage{xspace}
\tikzstyle{every picture} = [>=latex]
\sloppy %
\pagestyle{plain}

\spnewtheorem{algorithm}[theorem]{Algorithm}{\bf}{\rm}

\def\ca#1{{\cal#1}}
\def\pplusoid{\mathop{\mbox{$\otimes$}}}
\let\ljoin\pplusoid
\newcommand\dbar[1]{\widetilde#1}
\let\sem\setminus

\def\pplusx#1#2{\,{[#1\,|\>#2]}\,}
\def\pplus#1#2{\,\ppluso{#1\,|\>#2}\,}
\def\ppluso#1{\mathop{\mbox{$\otimes[#1]\,$}}}
\def\prebox#1{\mathop{\mbox{\sl#1}}}
\def\GFii{\mathrm{GF}(2)}
\def\mx#1{\mbox{\boldmath$#1$}}
\newcommand\rwd{\operatorname{rwd}}
\newcommand\cwd{\operatorname{cwd}}
\newcommand\tw{\operatorname{tw}}
\newcommand\bw{\operatorname{bw}}

\def\atable{\mbox{\sl Table}}

\newcommand{\nb}[1]{}

\newcommand{\maxsat}{\textsc{Max-SAT}\xspace}
\newcommand{\numsat}{\textsc{\#SAT}\xspace}
\newcommand{\sat}{\textsc{SAT}\xspace}

\begin{document}

\title{Better algorithms for satisfiability problems\\ for formulas of bounded
rank-width}
\author{Robert~Ganian \and\ Petr~Hlin\v{e}n\'y \and\ Jan Obdr\v z\'alek
\thanks{This research has been supported by the Czech research grant
GA 201/09/J021 and by the 
Institute for Theoretical Computer Science ITI, project 1M0545.}
}
\institute{
	Faculty of Informatics, Masaryk University\\
	Botanick\'a 68a, Brno, Czech Republic\\[.5ex]
\email{\{xganian1,hlineny,obdrzalek\}@fi.muni.cz}
}

\maketitle

\begin{abstract}
We provide a parameterized polynomial algorithm for the propositional model counting
problem \numsat, the runtime of which is single-exponential in the rank-width
of a formula.
Previously, analogous algorithms have been known --
e.g.~[Fischer, Makowsky, and Ravve] -- with a single-exponential dependency on the
clique-width of a formula.
Our algorithm thus presents an exponential runtime improvement
(since clique-width reaches up to exponentially higher values than
rank-width), and can be of practical interest for small values of
rank-width.
We also provide an algorithm for the \maxsat problem along the same lines.

\medskip\keywordname\
propositional model counting; satisfiability; rank-width; clique-width; parameterized complexity.
\end{abstract}

\section{Introduction}

The satisfiability problem for Boolean formulas in conjunctive normal form
(known as \sat) has been of great practical and theoretical interest for
decades. It is known to be NP-complete, even though many instances are practically solvable
using the various \sat-solvers. We focus on two
well-known generalizations of this problem, namely \numsat and \maxsat. In
\numsat -- otherwise known as the {\em propositional model counting} problem,
the goal is to compute the number of satisfying truth assignments
for an input formula $\phi$, whereas in \maxsat we ask for the maximum
number of simultaneously satisfiable clauses of $\phi$. It is known that
computing \numsat is \#P-hard \cite{Val79} and that \maxsat is already
NP-hard to approximate within some constant \cite{AS98}.

In light of these hardness results, we may ask what happens if we
restrict ourselves to some
subclass of inputs. 
The parameterized algorithmics approach is suitable in such a case. Let
$k$ be some parameter associated with the input instance. 
Such a decision problem is said to be \emph{fixed-parameter
  tractable (FPT)} if it is solvable in time $\ca O(n^p\cdot f(k))$ for some
constant $p$ and a computable function $f$. So the running time is
polynomial in the size $n$ of the input, but can be e.g. exponential in the 
parameter $k$. Obviously the specific form of $f$ plays an important
role in practical applicability of any such algorithm -- while FPT algorithms
with single-exponential $f$ can be feasible for non-trivial values of the
parameter, a double-exponential $f$ would make the algorithm impractical for
almost all values of $k$.

But what are suitable parameters for satisfiability problems?
In the particular case of \maxsat, one can consider the desired 
number of satisfied or unsatisfied clauses as a parameter of the input,
such as in \cite{ck04,ro09}, respectively.
Although, such approach is not at all suitable for \numsat which is
our prime interest in this paper.

Another approach used for instance by Fischer, Makowsky and Ravve \cite{FMR07}
represents the formula $\phi$ as a formula graph $F_\phi$ (nodes of
which are the clauses and variables of $\phi$, see
Definition~\ref{def:formgraph}),
and exploits the fact that for graphs there are many known (and intensively studied)
so called \emph{width parameters}. 
In~\cite{FMR07} the
authors presented FPT algorithms for the \numsat problem in the case of two
well known width parameters -- {\em tree-width} and {\em clique-width}.
A similar idea was used by Georgiou and Papakonstantinou \cite{GP08}
also for the \maxsat problem and by Samer and Szeider \cite{SS10}
for \numsat. 

The latter algorithms work by dynamic programming on tree-like decompositions
related to the width parameters (tree-decompositions and
clique-decompositions -- often called $k$-expressions -- in the cases
above). However, there is the separate issue of the complexity of computing
the width of the formula graph and its decomposition. In the case of
tree-width this can be done in FPT \cite{Bod97}.  
For the much more general clique-width (every graph
of bounded tree-width also has bounded clique-width, while the converse does
not hold) there exist no such algorithms and we rely on approximations or an
oracle. In~\cite{SS10} the authors made the following statement on this issue:

\begin{quote}
  A single-exponential algorithm (for \numsat)
  is due to Fisher, Makowsky, and Ravve~\cite{FMR07}.
  However, both algorithms rely on clique-width approximation
  algorithms. The known polynomial-time algorithms for that purpose admit
  an exponential approximation error \cite{HO08} and are of limited
  practical value.
\end{quote}

The exponential approximation error mentioned in this statement results by bounding
the clique-width by a another, fairly new, width parameter called
\emph{rank-width} (Definition~\ref{def:rankwidth}). 
Rank-width is bounded if and only if clique-width is
bounded, but its value can be exponentially lower than that of clique-width
(Theorem~\ref{thm:rwinequalities}\,a,b).
And since clique-width generalizes tree-width, so
does rank-width (Theorem~\ref{thm:rwinequalities}\,c).
Moreover, for rank-width we can efficiently compute the
related decomposition (Theorem~\ref{thm:rwalg}), which is in stark contrast to the case for
clique-width. Therefore an algorithm which is
linear in the formula size and {\em single-exponential in its rank-width}
challenges the claim quoted above, and can be of real practical value. In
this paper we present such algorithms for the problems \numsat and
\maxsat. More precisely we prove the following two results:

\begin{theorem}
\label{thm:main}
Both the \numsat and \maxsat problems have FPT algorithms running in time
$$\ca O( t^3\cdot 2^{3t(t+1)/2}\cdot|\phi|)$$
where $t$ is the rank-width of the input instance (CNF formula) $\phi$.
\end{theorem}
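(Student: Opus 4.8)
The plan is to compute a rank-decomposition of the formula graph $F_\phi$ and to run a bottom-up dynamic programming along it, counting partial truth assignments (for \numsat) resp.\ maximising the number of satisfied clauses (for \maxsat). By Theorem~\ref{thm:rwalg} we may assume a width-$t$ rank-decomposition $(T,\mu)$ of $F_\phi$ is available; root $T$ arbitrarily. Each node $x$ of $T$ is associated with the set $U_x\subseteq V(F_\phi)$ of variables and clauses whose leaves lie below $x$, and the cut $(U_x,V(F_\phi)\setminus U_x)$ has $\GFii$-rank at most $t$. Writing this cut through a bilinear form, we may assign to every vertex $w$ a label $\lambda(w)\in\GFii^{\,t}$ so that, for $u\in U_x$ and $v\notin U_x$, the incidence $uv\in E(F_\phi)$ holds iff $\langle\lambda(u),\lambda(v)\rangle=1$; this is exactly the algebraic view underlying the composition operator $\pplusoid$ and is the mechanism by which only $t$ bits of interface information ever cross a cut.

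The core of the argument is the design of the DP signature stored at each node. Fix a node $x$ and a partial assignment $\tau$ of the variables in $U_x$. To extend the bookkeeping across the cut we must record two things: (i) which clauses \emph{outside} $U_x$ are already satisfied by $\tau$, and (ii) for clauses \emph{inside} $U_x$ not yet satisfied by $\tau$, under which outside assignments they can still become satisfied. The obstacle is that being satisfied is a disjunction over incident literals, whereas rank-width controls only the $\GFii$-\emph{linear} structure of incidences; a disjunction is not linear, so naively the satisfaction pattern could be an arbitrary subset of the $2^{t}$ label classes, giving a hopeless $2^{2^{t}}$ states. The key observation that breaks this is linear-algebraic: a clause $C$ is satisfied from $U_x$ exactly when some satisfying incident literal sits on a variable $u$ with $\langle\lambda(u),\lambda(C)\rangle=1$; hence, if $L_\tau\le\GFii^{\,t}$ denotes the span of the labels of the inside literals made true by $\tau$, then $C$ is satisfied from inside iff $\lambda(C)\notin L_\tau^{\perp}$. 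Thus the whole satisfaction footprint of $\tau$ on the other side is determined by the single subspace $L_\tau$, and symmetrically the unmet requirements of the inside clauses are captured by subspace data of the same kind (with separate tracking for the positive- and negative-literal sides to respect signs). Counting the linear-algebraic objects needed --- three such $\GFii$ forms/subspaces on $t$ coordinates --- bounds the number of distinct signatures by $2^{3t(t+1)/2}$.

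With this signature space the DP itself is routine. At a leaf we initialise the entries corresponding to setting its variable true/false, or to placing its clause. At an internal node merging children with inside sets $U_1,U_2$ we combine the two tables: a pair of child signatures together with the cross-incidences between $U_1$ and $U_2$ --- themselves read off from the bilinear form --- determines the parent signature, obtained by adding the two subspace data through the relabelling linear maps supplied by $\pplusoid$, while every clause straddling the new split is declared satisfied or not exactly once. For \numsat the two counts are multiplied and accumulated into the parent bucket; for \maxsat we instead store, per signature, the maximum number of satisfied clauses and combine by addition-then-maximum. All the per-entry work (spanning subspaces, applying the linear maps, evaluating the form) is Gaussian elimination on $t\times t$ matrices, i.e.\ $\ca O(t^{3})$; iterating it over the pairs of signatures and interaction patterns at a node costs $\ca O\!\big(t^{3}\cdot 2^{3t(t+1)/2}\big)$, and as there are $\ca O(|\phi|)$ nodes the stated total follows. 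Reading the answer off the root --- summing the counts (resp.\ taking the maximum) over the signatures in which every clause is recorded as satisfied --- finishes both problems.

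The step I expect to be the crux is the one isolated above: making the disjunctive notion of clause satisfaction compatible with the linear ($\GFii$) nature of rank-width. Everything else (the decomposition, the bilinear encoding, the leaf/merge mechanics, the max-plus variant for \maxsat) is standard once the signature is shown to live in a space of size $2^{\ca O(t^{2})}$; the real content is the observation that a partial assignment interacts with the rest of the formula only through the \emph{span} of the labels of its true literals, so that the exponentially many a-priori satisfaction patterns collapse onto the $2^{\ca O(t^{2})}$ subspaces of $\GFii^{\,t}$ and yield the single-exponential bound $2^{3t(t+1)/2}$.
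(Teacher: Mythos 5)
Your proposal correctly captures one half of the signature: the ``outgoing'' effect of a partial assignment $\tau$ on $U_x$ is indeed determined by the spans $\Sigma^+=\langle lab^+(\tau^{-1}(1))\rangle$ and $\Sigma^-=\langle lab^-(\tau^{-1}(0))\rangle$, because asking whether a single outside clause sees at least one true literal inside is a non-orthogonality test against a span (Lemma~\ref{lem:ortspace}). The genuine gap is in the other half. You assert that ``symmetrically the unmet requirements of the inside clauses are captured by subspace data of the same kind,'' but the situation is not symmetric: whether a pending inside clause $c$ is eventually satisfied is the test ``future outside span non-orthogonal to $lab^+(c)$ or to $lab^-(c)$'' applied to $c$ \emph{individually}, and the span of the pending clauses' labels does not determine the conjunction of these tests. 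Concretely, over $\GFii^2$ the pending label sets $\{(1,0),(0,1)\}$ and $\{(1,0),(0,1),(1,1)\}$ have the same span, yet against a future span $\langle(1,1)\rangle$ every clause of the first set is satisfied while the clause labeled $(1,1)$ in the second is not. So the a-priori state for pending clauses is an arbitrary subset of $\GFii^{t}\times\GFii^{t}$ --- exactly the doubly-exponential blowup you set out to avoid, and your proposal does not actually avoid it.

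The paper's way out is not a collapse of the pending set onto a subspace but an \emph{expectation} mechanism (Definition~\ref{def:signature}.II): the signature carries two further \emph{guessed} subspaces $\Pi^+,\Pi^-$ describing what the complementary, unprocessed part of the formula will contribute, and a partial assignment has shape $(\Sigma^+,\Sigma^-,\Pi^+,\Pi^-)$ only if every clause it leaves unsatisfied is non-orthogonal to $\Pi^+$ or $\Pi^-$. This per-clause check is performed while the clause is still at hand (at its leaf and at each merge, after pulling $\Pi_z^{\pm}$ back through the transposed relabelings and adjoining the sibling's $\Sigma$'s), so nothing about individual pending clauses is ever stored. The price is the enumeration of subspace tuples; note also that $2^{3t(t+1)/2}=S(t)^6$ with $S(t)\le 2^{t(t+1)/4}$ comes from the \emph{six} free subspaces in the join step, not from three as in your count. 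Two smaller omissions: Theorem~\ref{thm:rwalg} applies to ordinary graphs, whereas $F_\phi$ is signed, so you need the reduction of Theorem~\ref{thm:signedrwalg} to obtain the $(t^+,t^-)$-labeling parse tree; and for \maxsat the expectation is again what lets one charge, per signature, the clauses that are neither satisfied nor expected to be (the ``defect''), which a max-plus recurrence without $\Pi^{\pm}$ cannot account for consistently.
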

We refer to further Theorems \ref{thm:signedrwalg}, \ref{thm:nsat}
and~\ref{thm:maxsat} for details.

Note that our results present an {\em actual exponential runtime
improvement} in the
parameter over any algorithm utilizing the clique-width measure,
including aforementioned \cite{FMR07}.
This is since any parameterized algorithm $\ca A$ for a \sat problem has to depend at
least exponentially on the clique-width of a formula
(unless the exponential time hypothesis fails),
and considering typical instances $\phi$ as from Proposition~\ref{pro:srwinequalities}\,b,
such an algorithm $\ca A$ then runs in time doubly-exponential in the
rank-width of~$\phi$.

As for potential practical usefulness of Theorem~\ref{thm:main}, 
note that there are no ``large constants'' hidden in the $\ca O$-notation.
One may also ask whether there are any interesting classes of graphs of low rank-width. 
The answer is a resounding YES, since already for $t=1$ 
we obtain the very rich class of distance-hereditary graphs. 
Rank-width indeed is a very general graph width measure.

\medskip

The approach we use to prove both parts of Theorem~\ref{thm:main}
quite naturally extends the clever and skilled new algebraic methods 
of designing parameterized algorithms for graphs of bounded rank-width,
e.g.~\cite{CK07,BTV10,GH09}, to the area of \sat problems.
Yet, this is not a trivial extension---we remark that 
a straightforward translation of the algorithm
of~\cite{FMR07} from clique-width expressions to rank-decompositions
(which is easily possible) would result just in a double-exponential runtime
dependency on the rank-width.

The rest of the paper is organized as follows: In
Section~\ref{sec:rankwidth} we present the rank-width measure
and some related technical considerations.
This is applied to signed graphs of \sat formulas.
Section~\ref{sec:numsat} then presents our FPT algorithm for
the \numsat problem (Theorem~\ref{thm:nsat} and Algorithm~\ref{alg:nsat}), 
and Section~\ref{sec:maxsat} the similar algorithm for \maxsat
(Theorem~\ref{thm:maxsat}). 
We conclude with some related observations.

\section{Overview of the rank-width measure}
\label{sec:rankwidth}

Graph rank-width~\cite{OS06}, the core concept of our paper, is not so well known,
and hence we give a detailed technical introduction to this concept
and its application to CNF formulas in this section.
Readers familiar with the concept of rank-width (and parse trees for rank-width) may proceed directly to
Section~\ref{sec:signed_graphs}.

\subsection{Branch-width and rank-width}

The usual way of defining rank-width is via the {branch-width} of the 
{cut-rank} function (Definition~\ref{def:rankwidth}).
A set function $f:2^M\rightarrow \mathbb{Z}$ is \emph{symmetric}
if $f(X)=f(M\setminus X)$ for all $X\subseteq M$.
A tree is \emph{subcubic} if all its nodes have degree at most~$3$.
For a symmetric function $f:2^M\rightarrow \mathbb{Z}$ on
a finite ground set~$M$,
the branch-width of $f$ is defined as follows:

A \emph{branch-decomposition} of $f$ is a pair $(T,\mu)$ of 
a subcubic tree $T$ and a bijective function 
$\mu:M\rightarrow \{t: \text{$t$ is a leaf of $T$}\}$.
For an edge $e$ of $T$, the connected components of $T\setminus e$
induce  a bipartition $(X,Y)$ of the set of leaves of $T$.
The \emph{width} of an edge $e$ of a
branch-decomposition $(T,\mu)$ is $f(\mu^{-1} (X))$. 
The \emph{width} of $(T,\mu)$ is the maximum width over all edges of $T$.
The \emph{branch-width}  of $f$ is the minimum of the  width of all
branch-decompositions of $f$. 

\begin{definition}[Rank-width \cite{OS06}]\rm
\label{def:rankwidth}
For a simple graph $G$ and $U,W\subseteq V(G)$, let $\mx A_G[U,W]$ be the matrix
defined over the two-element field $\mathrm{GF}(2)$ as follows:
the entry $a_{u,w}$, $u\in U$ and $w\in W$, of $\mx A_G[U,W]$ 
is $1$ if and only if $uw$ is an edge of~$G$.
The {\em cut-rank} function $\rho_G(U)=\rho_G(W)$ then equals the rank of 
$\mx A_G[U,W]$ over $\mathrm{GF}(2)$ where $W=V(G)\sem U$.
A \emph{rank-decomposition} (see Figure~\ref{fig:rdecC5})
and \emph{rank-width} of a graph $G$ 
is the branch-decomposition and branch-width of the cut-rank function
$\rho_G$ of~$G$ on $M=V(G)$, respectively.
\end{definition}

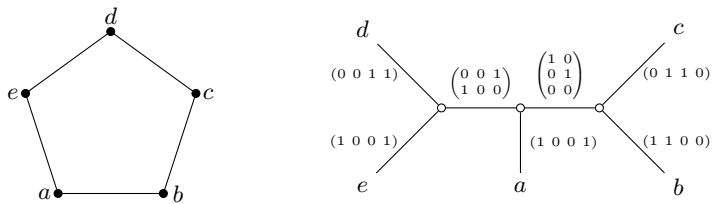
\begin{figure}[ht]
{\hfill
\begin{tikzpicture}[scale=0.7]
\tikzstyle{every node}=[draw, shape=circle, minimum size=3pt,inner sep=0pt, fill=black]

    \draw (0,0) node (a) [label=left:$a$] {}
        -- ++(0:2.0cm) node (b) [label=right:$b$] {}
        -- ++(72:2.0cm) node (c) [label=right:$c$] {}
        -- ++(144:2.0cm) node (d) [label=above:$d$] {}
        -- ++(216:2.0cm) node (e) [label=left:$e$] {}
        -- (a); 
\end{tikzpicture}
\qquad\qquad
\begin{tikzpicture}[x=1.5cm,y=1.5cm, scale=0.7]
\tikzstyle{every node}=[]
\tikzstyle{empty}=[draw, shape=circle, minimum size=3pt,inner sep=0pt, fill=white]
\node (e) at (0,0) {$e$};
\node (d) at (0,2) {$d$};
\node (ed) at (1,1) [empty] {};
\node (mid) at (2,1)[empty] {};
\node (bc) at (3,1) [empty] {};
\node (b) at (4,0) {$b$};
\node (c) at (4,2) {$c$};
\node (a) at (2,0) {$a$};

\draw (e) to node[left] {\tiny$(1\ 0\ 0\ 1)$} (ed);
\draw (d) to node[left] {\tiny$(0\ 0\ 1\ 1)$} (ed);
\draw (ed) to node[above] {\tiny$\begin{pmatrix}0&0&1\cr 1&0&0\end{pmatrix}$} (mid);
\draw (mid) to node[above] {\tiny$\begin{pmatrix}1&0\cr 0&1\cr 0&0\end{pmatrix}$} (bc);
\draw (bc) to node[right] {\tiny$(1\ 1\ 0\ 0)$} (b);
\draw (bc) to node[right] {\tiny$(0\ 1\ 1\ 0)$} (c);
\draw (a) to node[right] {\tiny$(1\ 0\ 0\ 1)$} (mid);
\end{tikzpicture}
\hfill}
\medskip
\caption{A rank-decomposition of the graph cycle $C_5$,
	showing the matrices involved in evaluation of its cut-rank function
	on the edges of the decomposition.}
\label{fig:rdecC5}
\end{figure}

As already mentioned in the introduction, rank-width is closely related to clique-width
and more general than better known tree-width.
Indeed:

\begin{theorem}
\label{thm:rwinequalities}
Let $G$ be a simple graph, and $\tw(G)$, $\bw(G)$, $\cwd(G)$, $\rwd(G)$
denote in this order the tree-width, branch-width, clique-width, and
rank-width of $G$.
Then the following hold
\begin{enumerate}[a)]\vspace*{-3pt}
\item\cite{OS06}~
$\rwd(G)\leq \cwd(G) \leq 2^{\rwd(G)+1}-1$,
\item\cite{CR05}~~
the clique-width $\cwd(G)$ can reach up to $2^{\rwd(G)/2-1}$,
\item\cite{Oum08}~
$\rwd(G)\leq\bw(G) \leq\tw(G)+1$,
\item{[folklore]}
$\tw(G)$ cannot be bounded from above by $\rwd(G)$,
e.g.\ the complete graphs have rank-width $1$ while their tree-width is
unbounded,
\item\cite{Oum05}~
$\rwd(G)=1$ if and only if $G$ is a distance-hereditary graph.
\end{enumerate}
\end{theorem}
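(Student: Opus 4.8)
The plan is to treat Theorem~\ref{thm:rwinequalities} as a compilation of known facts: parts (a), (b), (e) and the second inequality of (c) are established in the cited sources \cite{OS06,CR05,Oum05,Oum08}, so I would invoke them directly, while supplying a self-contained argument for the folklore part (d) and sketching the conceptual core of (a) and of the first inequality of (c) to keep the statement motivated.

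For part (d) the decisive example is the complete graph $K_n$. First I would compute its rank-width. Fix any bipartition $(U,W)$ of $V(K_n)$ with both sides nonempty; since every vertex of $U$ is adjacent to every vertex of $W$, the matrix $\mx A_{K_n}[U,W]$ is the all-ones matrix, whose rank over $\GFii$ is exactly $1$. Hence $\rho_{K_n}(U)=1$ for every nontrivial cut, so every branch-decomposition of $\rho_{K_n}$ has width $1$ and $\rwd(K_n)=1$. On the other hand $\tw(K_n)=n-1$, which is unbounded as $n\to\infty$, so no function of $\rwd$ can bound $\tw$ from above, proving (d).

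For the chain (c), the second inequality $\bw(G)\le\tw(G)+1$ is the classical Robertson--Seymour bound, while the first, $\rwd(G)\le\bw(G)$, is due to \cite{Oum08}. The conceptual reason for the latter is that the cut-rank of a vertex bipartition $(U,W)$ is at most the number of vertices incident to a crossing edge, i.e.\ the number of nonzero rows of $\mx A_G[U,W]$, a quantity the branch-width of $G$ controls. Since graph branch-width decomposes the \emph{edges} rather than the vertices, however, converting an optimal edge branch-decomposition into a rank-decomposition of no larger width needs the argument of \cite{Oum08}, which I would cite directly rather than reprove.

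For part (a), the idea behind $\rwd(G)\le\cwd(G)$ is that a $k$-expression parse tree, read as a branch-decomposition of $V(G)$, has the property that across any cut the vertices introduced on one side carry at most $k$ distinct labels, and any two vertices sharing a label have identical adjacency to the other side; thus $\mx A_G[U,W]$ has at most $k$ distinct rows and rank at most $k$. The reverse bound $\cwd(G)\le 2^{\rwd(G)+1}-1$ is the delicate direction and, I expect, the real technical obstacle of the theorem: from a rank-decomposition one relabels vertices according to their neighborhood vectors in the relevant row space, of which there are at most $2^t$ at a cut of rank $t$, and a careful accounting over the whole tree yields the stated bound. This construction is carried out in full in \cite{OS06}, so I would invoke it; parts (b) \cite{CR05} and (e) \cite{Oum05} are likewise cited directly.
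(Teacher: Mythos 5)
The paper gives no proof of this theorem at all---it is stated purely as a compilation of known results with the same citations you invoke, so your approach matches the paper's. Your additional self-contained verification of part (d) (the all-ones matrix over $\GFii$ has rank $1$, so $\rwd(K_n)=1$ while $\tw(K_n)=n-1$) and your sketches of (a) and (c) are correct and go slightly beyond what the paper records.
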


Although rank-width and clique-width are ``tied together'' (a),
one of the crucial advantages of rank-width is its parameterized
tractability
(on the other hand, it is not known how to efficiently test
$\cwd(G)\leq k$ for $k>3$):

\begin{theorem}[\cite{HO08}]
\label{thm:rwalg}~
There is an FPT algorithm
that, for a fixed parameter $t$ and a given graph $G$,
either finds a rank-decomposition of $G$ of width at most~$t$
or confirms that the rank-width of $G$ is more than~$t$.
\end{theorem}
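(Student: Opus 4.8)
The plan is to reduce the computation of $\rwd(G)$ to the branch-width of a symmetric submodular function, and then to combine a crude polynomial-time approximation with an exact, bounded-width dynamic program. Recall from Definition~\ref{def:rankwidth} that $\rwd(G)=\bw(\rho_G)$, where the cut-rank function $\rho_G$ on $M=V(G)$ is symmetric and, being a $\GFii$-rank function, submodular with $\rho_G(\emptyset)=0$. Moreover $\rho_G(X)$ is computable in polynomial time by Gaussian elimination, so $\rho_G$ is available as an efficient oracle. The whole task thus becomes: decide whether $\bw(\rho_G)\le t$ and, if so, output a witnessing branch-decomposition.

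First I would invoke the Oum--Seymour approximation for branch-width of symmetric submodular functions: given oracle access to such a function on an $n$-element set, one obtains in time polynomial in $n$ either a branch-decomposition of width at most $3t+1$, or a certificate (a tangle of order $t+1$) that the branch-width exceeds $t$. Applied to $\rho_G$, this either already rejects, in which case we report $\rwd(G)>t$, or it yields a rank-decomposition $(T_0,\mu_0)$ of width at most $3t+1$. This bounded approximate width is the scaffold along which the exact computation runs.

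The exact step proceeds by dynamic programming along $(T_0,\mu_0)$, in the spirit of the Bodlaender--Kloks tree-width algorithm but adapted to the linear-algebraic setting. For each edge of $T_0$ inducing a bipartition $(X,Y)$ of $V(G)$ with $\rho_G(X)\le 3t+1$, the relevant interface of the part $X$ is governed by the column space of $\mx A_G[X,Y]$ over $\GFii$, a subspace of dimension at most $3t+1$. I would define a \emph{characteristic} of $X$ as a succinct description, relative to this bounded interface, of how the subproblem induced on $X$ can be completed to a rank-decomposition of width at most $t$; crucially, since all interfaces live in $\GFii$-spaces of dimension bounded in terms of $t$, the number of possible characteristics is bounded by a function of $t$ alone. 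The DP then computes, bottom-up, the achievable characteristics of each part by an amalgamation rule combining the characteristics of two sibling parts into one for their union. At the root we read off whether a width-$\le t$ rank-decomposition exists, and backtracking through the table reconstructs it; careful bookkeeping yields a running time of the form $\ca O(f(t)\cdot n^3)$, as claimed.

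The main obstacle is this exact step, and specifically the characteristics themselves. One must prove both that finitely many characteristics suffice (a bound and an explicit computation depending only on $t$) and that the amalgamation rule is \emph{sound and complete}, i.e.\ that two local characteristics are declared compatible exactly when the corresponding parts glue into a global width-$\le t$ decomposition. This is where the $\GFii$-linear structure of $\rho_G$ is indispensable: it is what collapses the otherwise super-exponentially many candidate branch-decompositions into a finite, mergeable type system, and it is precisely the feature that fails for arbitrary symmetric submodular functions, for which only the approximation of the first step is available.
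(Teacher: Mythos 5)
This theorem is not proved in the paper at all --- it is imported wholesale from \cite{HO08} --- so there is no in-paper argument to measure you against; I can only assess your sketch on its own terms and against the cited source. Your skeleton is the right one and is shared by every known proof: cut-rank is symmetric and submodular with a polynomial-time oracle, so the Oum--Seymour approximation either certifies $\bw(\rho_G)>t$ or returns a decomposition of width $\ca O(t)$, and the problem reduces to an exact computation along that scaffold. That first half is fine.

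The gap is that your second half --- the exact step --- is asserted rather than proved, and it is the entire content of the theorem. The claim that ``since all interfaces live in $\GFii$-spaces of dimension bounded in terms of $t$, the number of possible characteristics is bounded by a function of $t$ alone'' does not follow from bounded interface dimension: a characteristic must summarize a whole partial branch-decomposition of the part $X$, i.e.\ an unbounded subcubic tree each of whose edges carries its own subspace of the boundary space, and compressing this to a bounded-size, correctly mergeable object is exactly the analogue of the Bodlaender--Kloks ``typical sequences'' machinery. Carrying this out for rank-width/matroid branch-width was an open problem for years after \cite{HO08} and was only resolved later (by Jeong, Kim, and Oum, via so-called $B$-trajectories); you name the obstacle honestly but do not overcome it, so the proof is incomplete at its core. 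For the record, \cite{HO08} itself takes a genuinely different route: it translates rank-width to branch-width of an associated binary matroid, decides $\bw\le t$ by testing the finitely many excluded minors guaranteed by the well-quasi-ordering of matroids of bounded branch-width over a finite field (minor-testing being monadic second-order definable and hence decidable on the approximate decomposition via parse trees), and then upgrades the decision procedure to a constructive one by a separate self-reduction argument. Either route works, but both require a substantial ingredient --- finite obstruction sets plus MSO model checking, or a finite characteristic calculus --- that your proposal leaves unproved.
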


\subsection{Labeling parse trees for rank-width}
\label{sub:parsetr}

Unlike for tree-width and clique-width, the standard definition of
rank-decom\-positions 
is not suitable for the immediate design of efficient algorithms.
To this end, closely following Courcelle and Kant\'e~\cite{CK07},
we have introduced so-called {\em labeling parse trees} \cite{GH09} 
(Definition~\ref{df:parsetree} and Figure~\ref{fig:parseC5}) --
a powerful formalism for dynamic programming design on graphs of bounded rank-width.
The basic idea is to transform rank-decompositions into suitable parse trees
and have algorithms use them instead of the decomposition.

A {\em$t$-labeling} of a graph is a mapping \mbox{$lab:V(G)\to2^{[t]}$} where
$[t]=\{1,2,\dots,t\}$ is the set of {\em labels}.
Having a graph $G$ with an associated $t$-labeling $lab$,
we refer to the pair $(G,lab)$ as to a {\em$t$-labeled graph} and use notation~$\bar G$.
We will often view a $t$-labeling of $G$ equivalently as a mapping $V(G)\to \GFii^t$ 
to the {\em binary vector space} of dimension~$t$, where $\GFii$ is the
two-element finite field.

\begin{definition}\rm
\label{def:ljoin}
Considering $t$-labeled graphs $\bar G_1=(G_1,lab^1)$ and
$\bar G_2=(G_2,lab^2)$,
a {\em$t$-labeling join} $\bar G_1\pplusoid\bar G_2$ is defined on the
disjoint union of $G_1$ and $G_2$ by adding
all edges $(u,v)$ such that $|lab^1(u)\cap lab^2(v)|$ is odd,
where $u\in V(G_1),v\in V(G_2)$.
(Alternatively,
$\{u,v\}$ is an edge of $\bar G_1\ljoin\bar G_2$ if and only if
$lab^1(u)\cdot lab^2(v)=1$ over $\mathrm{GF}(2)$.)
The resulting graph is unlabeled.
\end{definition}

A {\em$t$-relabeling} is a mapping $f:[t]\to2^{[t]}$.
In linear algebra terms, a $t$-rela\-beling $f$ is in a natural one-to-one correspondence
with a {\em linear transformation} $f:\GFii^t\to\GFii^t$,
i.e.\ a $t\times t$ binary matrix $\mx R_f$.
For a $t$-labeled graph $\bar G=(G,lab)$ we define $f(\bar G)$ as the same graph
with a vertex $t$-labeling $lab'=f\circ lab$.
Here $f\circ lab$ stands for the linear transformation $f$ applied to the
labeling $lab$, or equivalently $lab'=lab\times\mx R_f$ as matrix multiplication
over $\GFii^t$.

\begin{definition}[Labeling parse tree \cite{CK07,GH09}]\rm
\label{df:parsetree}
Let $\odot$ be a nullary operator creating a single new graph vertex of
label~$\{1\}$.
For $t$-relabelings $f_1,f_2,g:[t]\to2^{[t]}$, let $\pplus g{f_1,f_2}$ be a
binary operator\,---\,called {\em$t$-labeling composition}%
\,---\,over pairs of $t$-labeled graphs
$\bar G_1=(G_1,lab^1)$ and $\bar G_2=(G_2,lab^2)$ defined
(cf.~\ref{def:ljoin})
$$
        \bar G_1 \pplus g{f_1,f_2} \bar G_2 ~=~ \bar H
        \,=\, \big(\bar G_1\ljoin g(\bar G_2),\, lab\big)
$$
where the new labeling is $lab(v)=f_i\circ lab^i(v)$ for $v\in V(G_i)$,
$i=1,2$.
In other words, $\{u,v\}\in E(H)$ where $u\in V(G_1)$, $v\in V(G_2)$,
if and only if
\mbox{$lab^1(u)\times\mx R_g^T\times lab^2(v)^T=1$} over $\mathrm{GF}(2)$
(cf.\ Courcelle and Kant\'e~\cite{CK07}).

\smallskip
A {\em $t$-labeling parse tree} $T$
is a finite rooted ordered subcubic tree
(with the root degree at most $2$) such that
\begin{itemize}\vspace{-3pt}
\item all leaves of $T$ contain the $\odot$ symbol, and
\item each internal node of $T$ contains one of the $t$-labeling composition
symbols.
\end{itemize}\vspace{-3pt}
A parse tree $T$ then {\em generates} (parses) the graph $G$ which is
obtained
by successive leaves-to-root applications of the operators in the nodes
of~$T$.
\end{definition}
\vspace*{-1ex}

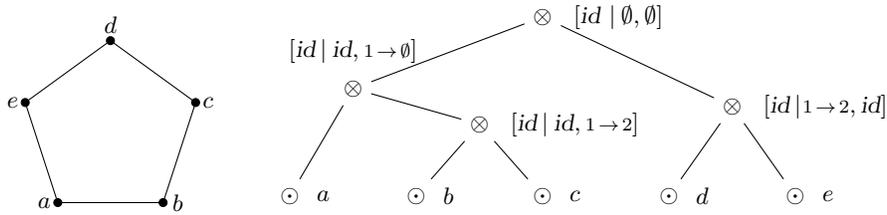
\begin{figure}[ht]
{\hfill
\begin{tikzpicture}[scale=0.7]
\tikzstyle{every node}=[draw, shape=circle, minimum size=3pt,inner sep=0pt, fill=black]
\draw (0,0) node (a) [label=left:$a$] {}
        -- ++(0:2.0cm) node (b) [label=right:$b$] {}
        -- ++(72:2.0cm) node (c) [label=right:$c$] {}
        -- ++(144:2.0cm) node (d) [label=above:$d$] {}
        -- ++(216:2.0cm) node (e) [label=left:$e$] {}
        -- (a); 
\end{tikzpicture}
\quad\quad
\begin{tikzpicture}[x=1.2cm,y=1.7cm, scale=0.7]
\tikzstyle{every node}=[]
\tikzstyle{empty}=[draw, shape=circle, minimum size=3pt,inner sep=0pt, fill=white]
\node (a) at (0,0) [label=right:$a$] {$\odot$} ;
\node (b) at (2,0) [label=right:$b$] {$\odot$};
\node (c) at (4,0) [label=right:$c$] {$\odot$};
\node (d) at (6,0) [label=right:$d$] {$\odot$};
\node (e) at (8,0) [label=right:$e$] {$\odot$};
\node (abc) at (1,1.2) [label=above:$\pplusx{\prebox{id}\!}{\!\prebox{id},\mbox{\scriptsize$1\!\to\!\emptyset$}}$] {$\otimes$};
\node (bc) at (3,0.8) [label=right:$\pplusx{\prebox{id}\!}{\!\prebox{id},\mbox{\scriptsize$1\!\to\!2$}}$] {$\otimes$};
\node (de) at (7,1)   [label=right:$\pplusx{\prebox{id}\!}{\!\mbox{\scriptsize$1\!\to\!2$},\prebox{id}}$] {$\otimes$};
\node (abcde) at (4,2)[label=right:$\pplusx{\prebox{id}}{\emptyset,\emptyset}$] {$\otimes$};
\draw (a) to (abc);
\draw (b) to (bc);
\draw (c) to (bc);
\draw (d) to (de);
\draw (e) to (de);
\draw (bc) to (abc);
\draw (abc) to (abcde);
\draw (de) to (abcde);
\end{tikzpicture}
\hfill}
\caption{An example of a $2$-labeling parse tree which generates a cycle
	$C_5$, with symbolic relabelings at the nodes
	($\prebox{id}$~denotes the relabeling preserving all labels,
	and $\emptyset$ is the relabeling ``forgetting'' all labels).}
\label{fig:parseC5}
\end{figure}

\begin{figure}[ht]
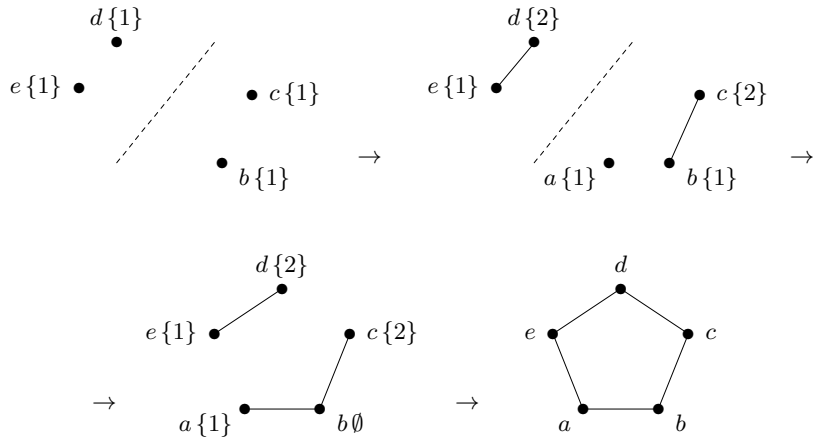

\begin{center}\small
\grpicture(200,100)(-40,0){.2}
\grvert B(70,0),C(90,45),D(-0,80),E(-25,50)
\grnovert x(0,0),y(65,80)\gredged xy
\grnamevert rbB{$b\,\{1\}$},rcC{$c\,\{1\}$},ctD{$d\,\{1\}$},lcE{$e\,\{1\}$}
\endgrpicture
$\to$\qquad\qquad\quad
\grpicture(190,100)(-40,0){.2}
\grvert A(30,0),B(70,0),C(90,45),D(-20,80),E(-45,50)
\grnovert x(-20,0),y(45,80)\gredged xy
\gredge BC,DE
\grnamevert lbA{$a\,\{1\}$},rbB{$b\,\{1\}$},rcC{$c\,\{2\}$},ctD{$d\,\{2\}$},lcE{$e\,\{1\}$}
\endgrpicture$\to$
\par\bigskip\bigskip
$\to$\qquad\qquad
\grpicture(160,120){.2}
\grvert A(20,0),B(70,0),C(90,50),D(45,80),E(0,50)
\gredge AB,BC,DE
\grnamevert lbA{$a\,\{1\}$},rbB{$b\,\emptyset$},rcC{$c\,\{2\}$},ctD{$d\,\{2\}$},lcE{$e\,\{1\}$}
\endgrpicture
$\to$\qquad\quad
\grpicture(150,120){.2}
\grvert A(20,0),B(70,0),C(90,50),D(45,80),E(0,50)
\gredge AB,BC,CD,DE,EA
\grnamevert lbA{$a$},rbB{$b$},rcC{$c$},ctD{$d$},lcE{$e$}
\endgrpicture
\end{center}
\caption{``Bottom-up'' generation of $C_5$ by the parse tree from
	\figurename~\ref{fig:parseC5}.}
\label{fig:parsegenC5}
\end{figure}

See Figures~\ref{fig:parseC5} and~\ref{fig:parsegenC5}.
The crucial statement is that rank-decompositions are exactly equivalent to
labeling parse trees:

\begin{theorem}[\cite{CK07,GH09}]
\label{thm:parse}~
A graph $G$ has rank-width at most $t$ if and only if
(some labeling of) $G$ can be generated by a $t$-labeling parse tree.
Furthermore, a width-$t$ rank-decomposition of $G$ can be transformed into
a $t$-labeling parse tree on $\Theta(|V(G)|)$ nodes in time~$\ca O(t^2\cdot|V(G)|^2)$.
\end{theorem}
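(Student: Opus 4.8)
The plan is to prove both directions of the equivalence by exploiting a single linear-algebraic fact: a binary matrix has rank at most $t$ if and only if it factors as a product $\mx L\times\mx R^T$ where $\mx L,\mx R$ each have $t$ columns. Under such a factorization the rows of $\mx L$ and of $\mx R$ are exactly the $t$-labels assigned to the two sides of a cut, and the defining condition $lab^1(u)\cdot lab^2(v)=1$ of the labeling join (Definition~\ref{def:ljoin}) becomes precisely the statement that $uv$ is an edge. This is why rank-$t$ cuts and $t$-labelings carry the same information, and the whole proof is an elaboration of this dictionary.

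For the easy direction ($\Leftarrow$), suppose $G$ (in some labeling) is generated by a $t$-labeling parse tree $T$. Forgetting the node operators, $T$ together with the leaf-to-vertex map is a branch-decomposition of $V(G)$. I would fix an edge $e$ of $T$, let $\bar H=(H,lab)$ be the $t$-labeled graph parsed at the node just below $e$, and let $X=V(H)$ be the corresponding side of the cut. Every edge of $G$ joining $X$ to $V(G)\sem X$ is created by some labeling join applied at an ancestor of $e$; unwinding the relabelings along the path to the root, each such edge relation factors through the $t$-dimensional labeling $lab$ on $X$, since every relabeling is a linear map. Hence $\mx A_G[X,V(G)\sem X]$ factors as a product in which one factor is the $|X|\times t$ label matrix, so its rank---and thus the cut-rank $\rho_G(X)$---is at most $t$. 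Taking the maximum over all edges of $T$ shows the width is at most $t$, so $\rwd(G)\le t$.

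For the harder direction ($\Rightarrow$), I start from a width-$t$ rank-decomposition $(T,\mu)$, root it by subdividing an arbitrary edge, and process $T$ bottom-up. For each edge $e$ with lower side $X_e$, the bound $\rho_G(X_e)\le t$ yields a rank factorization $\mx A_G[X_e,V(G)\sem X_e]=\mx L_e\times\mx R_e^T$ with $t$ columns, and the rows of $\mx L_e$ become the $t$-labels of the vertices of $X_e$. At an internal node with children edges $e_1,e_2$ and parent edge $e$, the edges of $G$ running between $X_{e_1}$ and $X_{e_2}$ are already determined by $\mx L_{e_1}$ and $\mx L_{e_2}$, and I must realise exactly these edges by a labeling composition $\pplus g{f_1,f_2}$ while simultaneously converting the child labels into the parent labels of $\mx L_e$. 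The relabelings $f_1,f_2$ are the linear maps sending the child factorizations to the parent factorization, and $g$ is the linear map encoding the cross-adjacency matrix between the two sides; the existence of all three as $t\times t$ matrices follows because every relevant cut has rank at most $t$, so all these matrices act on spaces of dimension at most $t$. The main obstacle is exactly this consistency step---showing that one fixed triple of relabelings simultaneously produces the correct internal edges and the correct propagated labels---and it is handled by choosing the factorizations $\mx L_e,\mx R_e$ coherently along the tree so that child and parent column spaces are linearly related, following the construction of Courcelle and Kant\'e~\cite{CK07} closely.

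Finally, for the quantitative claim, the parse tree inherits the $\Theta(|V(G)|)$ nodes of the subcubic tree $T$, since a branch-decomposition on $n$ leaves has $\Theta(n)$ nodes. The dominant cost per node is computing a rank factorization and the three relabeling matrices by Gaussian elimination over $\GFii$ on matrices with $t$ columns and $\ca O(|V(G)|)$ rows, which takes $\ca O(t^2\cdot|V(G)|)$ time; summed over the $\ca O(|V(G)|)$ nodes this gives the stated $\ca O(t^2\cdot|V(G)|^2)$ bound.
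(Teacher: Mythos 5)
This theorem is cited from \cite{CK07,GH09} and the paper gives no proof of it, so there is nothing in-text to compare against; judging your argument on its own terms, the equivalence part is the standard one and is essentially sound. Your $\Leftarrow$ direction is exactly right: for a fixed tree edge $e$, every cross edge $uv$ is created at the least common ancestor of the two leaves, the label of $u$ there is a fixed linear image of its label at $e$, and so entry $(u,v)$ of the cut matrix is $lab(u)\cdot w_v$ for a vector $w_v$ depending only on $v$; the matrix factors through the $|X|\times t$ label matrix and has rank at most $t$. For the $\Rightarrow$ direction you correctly identify the crux (one triple $g,f_1,f_2$ must simultaneously realise the internal edges and propagate the labels) and the fix (coherent rank factorizations). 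The detail you leave to \cite{CK07} is worth making explicit: the factorizations $\mx A[X_e,Y_e]=\mx L_e\mx R_e^T$ must be chosen with \emph{full column rank}, so that the adjacency row of $u$ determines $\mx L_e(u)$ linearly and vice versa; only then is the map ``child label of $u\mapsto$ parent label of $u$'' well defined (two vertices with equal child labels have equal adjacency rows, hence equal parent labels) and linear, and only then is the cross-adjacency a bilinear form $\mx L_{e_1}(u)\times\mx R_g^T\times\mx L_{e_2}(v)^T$ giving the relabeling $g$.

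The one genuine gap is in the runtime accounting. The matrices you must factorize are $\mx A[X_e,Y_e]$, of size $|X_e|\times|Y_e|$, not ``$t$ columns by $\ca O(|V(G)|)$ rows''; computing a rank factorization of such a matrix from scratch costs $\ca O(t\cdot|X_e|\cdot|Y_e|)$, and on an unbalanced (caterpillar-like) decomposition $\sum_e|X_e|\cdot|Y_e|=\Theta(|V(G)|^3/t)$ in the worst case, so per-node recomputation does not yield $\ca O(t^2\cdot|V(G)|^2)$. The bound requires an incremental scheme: maintain at each node a set of at most $t$ \emph{representative} vertices whose adjacency rows span the row space of the current cut matrix, observe that the row space at a parent is spanned by the restrictions of the (at most $2t$) representatives of its children, and run Gaussian elimination only on that $\ca O(t)\times\ca O(|V(G)|)$ matrix, costing $\ca O(t^2\cdot|V(G)|)$ per node and $\ca O(t^2\cdot|V(G)|^2)$ overall. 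Without this amortization the quantitative claim of the theorem is not established by your argument.
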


\subsection{Signed graphs and rank-width of CNF formulas}
\label{sec:signed_graphs}

Although there are several methods for converting formulas to graphs, the most
common and perhaps most natural approach uses so-called {\em signed graphs} 
(e.g. \cite{FMR07,SS10,gksy09}).
A signed graph is a graph $G$ with two edge sets $E^+(G)$ and $E^-(G)$.
We refer to its respective positive and negative subgraphs as to
$G^+$ and $G^-$.
Notice that $G^+$ and $G^-$ are edge-disjoint and $G=G^+\cup G^-$.

\begin{definition}\rm
\label{def:formgraph}
The signed graph $F_\phi$ of a CNF formula $\phi$ is defined as follows:
\begin{itemize}\vspace*{-3pt}
\item      
$V(F_\phi)=W\cup C$ where $W$ is the set of variables occurring in $\phi$ and 
$C$ is the set of clauses of $\phi$.
\item
For $w\in W$ and $c\in C$, it is $wc\in E^+(F_\phi)$ iff the literal `$w$'
occurs in~$c$.
\item
For $w\in W$ and $c\in C$, it is $wc\in E^-(F_\phi)$ iff the literal `$\neg w$'
occurs in~$c$.
\end{itemize} 
\end{definition}

Since signed graphs have two distinct edge sets, the definition of
rank-width needs to be modified to reflect this.
It should be noted that simply using two separate, independent decompositions
would not work -- the bottom-up dynamic programming algorithm we are going to
use will need information from both edge sets at every node to work properly.
Instead, one may define, analogically to Definition~\ref{def:rankwidth},
the {\em signed rank-width} of a signed graph $G$
as the branch-width of the signed cut-rank function
$\rho^\pm_G(U)=\rho_{G^+}(U)+\rho_{G^-}(U)$.

\begin{definition}[Rank-width of formulas]\rm
\label{def:sigrwf}
The {\em(signed) rank-width} $\rwd(\phi)$ of a CNF formula $\phi$
is the signed rank-width of the signed formula graph~$F_\phi$.
\end{definition}

Although our signed rank-width is essentially equivalent to an existing concept of
bi-rank-width of directed graphs as introduced by Kant\'e~\cite{Kan08}
(in the bipartite case, at least),
the latter concept is not widely known and its introduction in the context
of CNF formulas would bring only additional technical complications.
In the SAT context,
it is more natural and easier to deal with undirected signed graphs.
Hence we introduce, following previous Definition~\ref{df:parsetree},
{\em $(t^+,t^-)$-labeling parse trees} which will be equivalent to
signed rank-width (up to a factor of $2$, see Theorem~\ref{thm:signedrwalg}) 
in a way analogical to Theorem~\ref{thm:parse}.

\begin{definition}\rm
\label{def:sigparse}
A {\em$(t^+,t^-)$-labeling parse tree} $T=(T^+,T^-)$ of a signed graph $G$ 
is a pair $(T^+,T^-)$ of two labeling parse trees $T^+$ and $T^-$ such that: 
\begin{enumerate}[I.]\vspace*{-3pt}
\item 
$T^+$ ($T^-$) is a $t^+$-labeling ($t^-$-labeling) parse tree generating
$G^+$ ($G^-$), and 
\item
The underlying rooted ordered trees of $T^+$ and $T^-$ are identical.
\end{enumerate}
With a slight abuse of terminology, we will refer to the pair of subtrees
of $T^+$ and $T^-$ rooted at a common node $s$ as to a subtree of~$T$
rooted at~$s$.
\end{definition}

Analogically to labeled graphs of Section~\ref{sub:parsetr}, 
\nb{consider moving this paragraph to previous section -- keep here for the
long version, the abstract will skip parse tree stuff}
we call a signed graph $G$ with associated pair of labelings
$lab^+:V(G)\to2^{[t^+]}$ and $lab^-:V(G)\to2^{[t^-]}$
a {\em$(t^+,t^-)$-labeled graph} $\dbar G=(G,lab^+,lab^-)$.
We shortly refer to the $t^+$-labeled graph $(G^+,lab^+)$ as to
$\dbar G^+$, and analogically to $\dbar G^-$.
The scope of the join operation $\ljoin$ (Definition~\ref{def:ljoin})
can then be extended in a natural way as
$$
\dbar G_1\ljoin\dbar G_2\>=\>
	\big(\dbar G_1^+ \ljoin \dbar G_2^+\big)\cup
	\big(\dbar G_1^- \ljoin \dbar G_2^-\big)
.$$

In our paper we propose signed rank-width as a way of measuring
complexity of formulas  that fares significantly better than previously
considered signed clique-width of $F_\phi$ (e.g.~\cite{FMR07}).
{\em Signed clique-width} is the natural extension of clique-width having
two separate operators for creating the `plus' and the `minus' edges.
The advantage of our approach is witnessed by the following two claims.
\nb{mention, that for clique-width the colours remain the same
 [similarly to bi-rank-width] -- no, this is now different...}

\begin{proposition}
\label{pro:srwinequalities}
Let $\phi$ be an arbitrary CNF formula and $F_\phi$ its signed graph
of signed clique-width~$\cwd(\phi)$.
Then the following are true
\begin{enumerate}[a)]\vspace*{-3pt}
\item
$\rwd(\phi)\leq2\cwd(\phi)$,
\item
there exist instances $\phi$ such that $\cwd(\phi)\geq2^{\rwd(\phi)/4-1}$.
\end{enumerate}
\end{proposition}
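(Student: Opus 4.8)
The plan is to prove the two parts separately, in each case reducing the signed setting to the ordinary (unsigned) one by handling the positive and negative subgraphs along a single, common decomposition.

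For part~(a) I would reuse the standard argument behind $\rwd(G)\le\cwd(G)$ (Theorem~\ref{thm:rwinequalities}\,a). Fix an optimal signed clique-width expression of $F_\phi$ using $k=\cwd(\phi)$ colours and let $T$ be its syntax tree with all unary (recolour and join) nodes suppressed, so that $T$ is a subcubic tree whose leaves are the vertices of $F_\phi$ and whose every edge induces a bipartition $(U,W)$ of $V(F_\phi)$. Deleting all negative join operations from the expression leaves a valid ordinary clique-width expression for $F_\phi^+$ on the same $k$ colours, and symmetrically for $F_\phi^-$. In such an expression, any two vertices of $U$ carrying the same colour at the moment the subexpression producing $U$ completes keep that colour forever (recolourings act uniformly on colour classes) and therefore acquire identical neighbourhoods in $W$; hence $\mx A_{F_\phi^+}[U,W]$ has at most $k$ distinct rows and $\rho_{F_\phi^+}(U)\le k$, and likewise $\rho_{F_\phi^-}(U)\le k$. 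Summing gives $\rho^\pm_{F_\phi}(U)=\rho_{F_\phi^+}(U)+\rho_{F_\phi^-}(U)\le 2k$ for every edge of the one tree $T$, so $T$ witnesses signed rank-width at most $2k$, i.e.\ $\rwd(\phi)\le 2\cwd(\phi)$.

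For part~(b) I would start from the graphs supplied by Theorem~\ref{thm:rwinequalities}\,b: a family of (in general non-bipartite) graphs $G$ with $\rwd(G)=r$ and $\cwd(G)\ge 2^{r/2-1}$. Since a formula graph is necessarily bipartite, $G$ cannot be realised directly as some $F_\phi^+$, so I would encode $G$ through its bipartite double cover. Concretely, build the monotone CNF formula $\phi$ whose variables are the vertices of $G$, with one clause $c_v$ per vertex $v$, where the positive literal $u$ occurs in $c_v$ exactly when $uv\in E(G)$. Then $F_\phi^-$ is empty and $F_\phi^+=B(G)$ is the bipartite graph on two copies $\{v_W\},\{v_C\}$ of $V(G)$ whose biadjacency matrix is the adjacency matrix of $G$; consequently the signed rank-width of $F_\phi$ equals $\rwd(B(G))$ and its signed clique-width equals $\cwd(B(G))$. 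I would then establish $\rwd(\phi)\le 2r$ by ``doubling'' an optimal width-$r$ rank-decomposition of $G$: replace each leaf $v$ by a cherry carrying the two leaves $v_W,v_C$. An internal edge separating vertex classes $X,Y\subseteq V(G)$ now induces in $B(G)$ a cut whose adjacency consists of the blocks $\mx A_G[X,Y]$ and $\mx A_G[X,Y]^{\!\top}$, so its cut-rank is at most $2\,\rho_G(X)\le 2r$, while each cherry edge has cut-rank $1$; hence $\rwd(\phi)=\rwd(B(G))\le 2r$. Combined with $\cwd(\phi)=\cwd(B(G))\ge\cwd(G)\ge 2^{r/2-1}$ and $r\ge\rwd(\phi)/2$, this yields $\cwd(\phi)\ge 2^{\rwd(\phi)/4-1}$, as claimed.

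The main obstacle is precisely the clique-width lower bound $\cwd(B(G))\ge\cwd(G)$: one must certify that passing to the bipartite double cover does not lose clique-width, so that the exponential gap of Theorem~\ref{thm:rwinequalities}\,b survives the encoding intact (any constant-factor or, worse, logarithmic loss here would spoil the exponent). The natural route is to transform an arbitrary $k$-colour expression for $B(G)$ into one for $G$ by identifying the two copies $v_W,v_C$ of each vertex, and the delicate point is to do so \emph{without increasing the number of colours}; a naive product construction tracking the colours of $v_W$ and $v_C$ independently would square $k$ and only give the far weaker exponent $\rwd(\phi)/8$. I expect the clean solution to augment the construction with a sign-encoded perfect matching $\{v_Wv_C\}$ placed in $F_\phi^-$, which pins the correspondence between variable and clause copies and lets one keep their colours synchronised while redirecting each encoded adjacency $v_Wu_C$ to the single edge $vu$ of $G$; verifying that this synchronisation is colour-preserving is the one genuinely technical step of the argument.
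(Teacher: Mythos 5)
Your part~(a) is correct and essentially identical to the paper's proof: a signed $k$-expression yields ordinary $k$-expressions for $F_\phi^+$ and $F_\phi^-$ over one common syntax tree, each edge of which therefore has cut-rank at most $k$ in each of the two subgraphs, so the signed cut-rank is at most $2k$ everywhere. The extra detail you supply (colour classes are preserved under recolouring, hence at most $k$ distinct rows in $\mx A_{F_\phi^+}[U,W]$) is just the standard argument behind Theorem~\ref{thm:rwinequalities}\,a and is fine.

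Part~(b) has a genuine gap. You route the construction through the bipartite double cover $B(G)$ of a non-bipartite extremal graph $G$, and the entire lower bound then rests on the inequality $\cwd(B(G))\ge\cwd(G)$, which you explicitly leave unproved. This is not a routine verification: identifying the two copies $v_W,v_C$ of each vertex is not an operation under which clique-width is known to behave monotonically, and your proposed fix --- a perfect matching placed in $F_\phi^-$ meant to ``synchronise'' the colours of the two copies --- does not obviously work, because a $k$-expression for $B(G)$ is free to assign $v_W$ and $v_C$ unrelated colours at every stage, matching or no matching. Until that inequality is established, the exponential gap of Theorem~\ref{thm:rwinequalities}\,b does not transfer to $\cwd(\phi)$, and the exponent $\rwd(\phi)/4-1$ is not proved. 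The paper avoids the double cover entirely: it takes the extremal graph itself as the positive formula graph, setting $F_\phi^+=G$ directly, chooses $F_\phi^-$ arbitrarily so that the rank-decomposition inherited from that of $G$ has width at most $\rwd(G)$ (hence $\rwd(\phi)\le2\rwd(G)$), and then $\cwd(\phi)\ge\cwd(F_\phi^+)=\cwd(G)$ is immediate from the same ``drop the negative joins'' observation you already use in part~(a). Your worry about bipartiteness is a legitimate reading of Theorem~\ref{thm:rwinequalities}\,b, but the resolution is to use extremal instances realizable as variable--clause graphs, not to double-cover non-bipartite ones; the detour converts a presentational point into a substantive unproved claim. (Your upper bound $\rwd(B(G))\le2\rwd(G)$ via cherries is correct; it is the clique-width lower bound that carries the statement.)
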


\begin{proof}
a)\nb{correction of wrong proof}
Assume a signed $k$-expression tree $S$ for $\phi$ where $k=\cwd(\phi)$.
Clearly, $S$ gives ordinary $k$-expression trees for each of $F_\phi^+$, $F_\phi^-$.
Now, analogically to Theorem~\ref{thm:rwinequalities}(a) \cite{OS06},
the rank-decompositions of $F_\phi^+$ and $F_\phi^-$ with the same
underlying tree as $S$ have widths $\leq k$ each,
and hence $\rwd(\phi)\leq k+k=2\cwd(\phi)$.

\smallskip
b) We define $\phi$ such that $F_\phi^+=G$ where,
cf.\ Theorem~\ref{thm:rwinequalities}(b),
$\cwd(G)\geq2^{\rwd(G)/2-1}$,
and $F_\phi^-$ is arbitrary such that its rank-decomposition inherited from
that of $G$ has width $\leq\rwd(G)$.
Then $\rwd(\phi)\leq2\rwd(G)$ and the claim follows since
$\cwd(\phi)\geq\cwd(G)$.
\qed\end{proof}

\begin{theorem}[\cite{HO08}]
\label{thm:signedrwalg}~
There is an FPT algorithm that, for a fixed parameter $t$ and a given CNF
formula $\phi$,
either finds a $(t^+,t^-)$-labeling parse tree for the formula graph
$F_{\phi}$ where $t^+\!\leq t$ and $t^-\!\leq t$,
or confirms that the signed rank-width of $\phi$ is more than~$t$.
\end{theorem}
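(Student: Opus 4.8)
### Proof Plan for Theorem~\ref{thm:signedrwalg}

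The plan is to reduce this statement to the unsigned rank-width algorithm of Theorem~\ref{thm:rwalg} by encoding the two signed cut-rank functions into a single symmetric set function, and then to recover a $(t^+,t^-)$-labeling parse tree from the resulting rank-decomposition via Theorem~\ref{thm:parse}. First I would observe that the signed cut-rank $\rho^\pm_G(U)=\rho_{G^+}(U)+\rho_{G^-}(U)$ is itself a symmetric set function on $M=V(F_\phi)$, being the sum of the two (symmetric) cut-rank functions of $G^+$ and $G^-$. Hence ``signed rank-width of $\phi$ at most $t$'' is literally the statement that the branch-width of $\rho^\pm_{F_\phi}$ is at most $t$, and the task is to compute a branch-decomposition witnessing this.

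The key technical step is to realize $\rho^\pm$ as an ordinary (single-graph) cut-rank so that the existing FPT machinery applies directly. The natural device is to combine the positive and negative adjacency matrices into one matrix over $\GFii$ by a block construction: from $F_\phi$ build an auxiliary graph $G^\ast$ whose cut-rank function on the copy of $M$ equals, or is controlled by, $\rho_{G^+}(U)+\rho_{G^-}(U)$. Concretely, I would take two disjoint copies of the vertex set, wire the positive edges into one copy and the negative edges into the other, and then identify each original vertex with the pair of its two copies; a cut $U$ of the original set then splits as a corresponding cut in $G^\ast$ whose $\GFii$-rank is the block-diagonal sum of $\rho_{G^+}(U)$ and $\rho_{G^-}(U)$, precisely $\rho^\pm(U)$. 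Running the algorithm of Theorem~\ref{thm:rwalg} on $G^\ast$ then either returns a branch-decomposition of $\rho^\pm$ of width at most $t$, or certifies that none exists, i.e.\ that the signed rank-width of $\phi$ exceeds $t$. This is where the advertised factor of $2$ enters: a single rank-decomposition of width $t$ for the combined function yields, by restriction to the two copies, rank-decompositions of $G^+$ and of $G^-$ of widths $t^+\leq t$ and $t^-\leq t$ respectively, since $\rho_{G^+}(U)\leq\rho^\pm(U)$ and likewise for $G^-$.

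Finally, with width-$t^+$ and width-$t^-$ rank-decompositions of $G^+$ and $G^-$ sharing the \emph{same} underlying subcubic tree (both being restrictions of the one decomposition of $G^\ast$), I would invoke Theorem~\ref{thm:parse} separately on each to build labeling parse trees $T^+$ and $T^-$. Because the conversion of Theorem~\ref{thm:parse} is driven by the shape of the input decomposition tree, applying it to the common underlying tree produces $T^+$ and $T^-$ whose underlying rooted ordered trees are identical, which is exactly condition~(II) of Definition~\ref{def:sigparse}; condition~(I) holds by construction. The pair $(T^+,T^-)$ is the desired $(t^+,t^-)$-labeling parse tree, and the whole procedure inherits the FPT running time of Theorem~\ref{thm:rwalg} together with the polynomial overhead of Theorem~\ref{thm:parse}.

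The main obstacle I expect is ensuring that the two parse trees genuinely share one underlying ordered tree rather than merely two trees of the same abstract shape. The approximation algorithm of Theorem~\ref{thm:rwalg} is stated for the combined function, so I must check that restricting its output decomposition to the positive and negative parts really does leave the tree untouched and that Theorem~\ref{thm:parse}'s transformation is deterministic enough to respect a prescribed ordering. If the transformation were to re-root or re-order internally in a way depending on the individual graph, conditions~(I) and~(II) could conflict; the remedy is to fix the root and the child-ordering on the common tree \emph{before} running both conversions and to verify that the construction of Theorem~\ref{thm:parse} is a purely local, bottom-up relabeling that does not alter tree structure.
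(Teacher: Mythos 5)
Your overall strategy coincides with the paper's: form a vertex-disjoint union of $G^+$ and a copy of $G^-$, use the block-diagonal structure so that the cut-rank of a ``paired'' cut equals $\rho_{G^+}(U)+\rho_{G^-}(U)$, extract two inherited rank-decompositions of $G^+$ and $G^-$ sharing one underlying tree, and convert each via Theorem~\ref{thm:parse}. The gap is in the sentence ``running the algorithm of Theorem~\ref{thm:rwalg} on $G^\ast$.'' As stated, that algorithm computes an \emph{unconstrained} rank-decomposition of the $2n$-vertex disjoint union, and nothing forces a vertex and its copy to occupy a common leaf or even to stay together. An optimal decomposition of a disjoint union may simply separate the two components, giving width $\max(\rwd(G^+),\rwd(G^-))$, which says nothing about the signed rank-width; and a decomposition that splits the pairs cannot be ``restricted to the two copies'' to yield two decompositions on the same tree. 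Your phrase ``identify each original vertex with the pair of its two copies'' is exactly the missing constraint, but you never say how it is imposed algorithmically. The paper does this by invoking the \emph{$\ca P$-partitioned} variant of the algorithm of \cite{HO08}: the ground set of the branch-decomposition is the collection $\ca P$ of vertex pairs (each pair an indivisible leaf), and correctness follows because for every bipartition $(U,W)$ of $V(G')$ not crossing $\ca P$ one has $\rho_{G'}(U)=\rho^\pm_G(U\cap V(G))$. Without appealing to that partitioned variant (or equivalently to the fact that \cite{HO08} computes branch-width of general symmetric submodular functions such as $\rho^\pm$ directly), the reduction does not go through.

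Your second concern---that the two applications of Theorem~\ref{thm:parse} must yield parse trees with identical underlying rooted ordered trees, as required by condition~II of Definition~\ref{def:sigparse}---is legitimate, and your remedy (fix the root and the child ordering on the common tree before running both conversions) is exactly what the paper implicitly does when it converts the two ``inherited'' rank-decompositions separately.
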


\begin{proof}
We show that the algorithm of \cite{HO08} (Theorem~\ref{thm:rwalg})
can be used to compute also the signed rank-width $\leq t$ of a signed graph
$G=F_\phi$.
Indeed, we define a new graph $G'$ as the union of $G^+$ and a vertex-disjoint
copy $G_0^-$ of $G^-$, and a partition $\ca P$ of $V(G')$ as the collection of the
corresponding vertex pairs from $V(G^+)\times V(G_0^-)$.
Then we call the algorithm of \cite{HO08} to either compute a {\em $\ca P$-partitioned}
rank-decomposition of $G'$ of width $\leq t$,
or confirm that the $\ca P$-partitioned width is~$>\!t$.
This width is exactly our signed rank-width of $G$ since,
for any bipartition $(U,W)$ of $V(G')$ not crossing $\ca P$,
the cut-rank $\rho_{G'}(U)$ of $U$ in $G'$ trivially equals our signed cut-rank 
$\rho^\pm_G(X)=\rho_{G^+}(X)+\rho_{G^-}(X)$
of $X=U\cap V(G)$.
Lastly, applying Theorem~\ref{thm:parse}, we separately transform
the two ``inherited'' rank-decompositions of $G^+$ and $G_0^-$
into parse trees $T^+$ and $T^-$,
and output together the $(t^+,t^-)$-labeling parse tree $T=(T^+,T^-)$ of
$F_\phi$.
\qed\end{proof}

\section{Algorithm for propositional model counting \numsat}
\label{sec:numsat}

This section proves our most important result -- the \numsat part of
Theorem~\ref{thm:main}.

We remind the readers that the previous best algorithm
\cite{FMR07} for \numsat\ on graphs of bounded clique-width
has had a single-exponential runtime dependency on the signed
clique-width of a formula (and this dependency cannot be further improved
unless the so called exponential time hypothesis fails).
Hence by Proposition~\ref{pro:srwinequalities}.b the worst case scenario
for the algorithm of \cite{FMR07} would lead to a double-exponential
runtime dependency on the signed rank-width of the formula.
On the other hand:

\begin{theorem}
\label{thm:nsat}
Given a CNF formula $\phi$ and
a $(t^+,t^-)$-labeling parse tree (Theorem~\ref{thm:signedrwalg}) 
of the formula graph $F_\phi$ (Definition~\ref{def:sigparse}),
there is an algorithm that
counts the number of satisfying assignments of $\phi$ in time
$$\ca O( t^3\cdot 2^{3t(t+1)/2}\cdot|\phi|)
	\qquad\mbox{where $t=\max(t^+,t^-)$.}
$$
\end{theorem}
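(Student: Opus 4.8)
The plan is to run a leaves-to-root dynamic program on the given $(t^{+},t^{-})$-labeling parse tree $T=(T^{+},T^{-})$, maintaining at every node $s$ a table that records, for each relevant \emph{type} $\tau$, the number of partial truth assignments of the variables generated below $s$ whose ``behaviour towards the rest of the formula'' is exactly $\tau$. Writing $\bar G_s$ for the labeled signed graph parsed at $s$, with variable set $W_s$ and clause set $C_s$, the guiding principle is \emph{interchangeability}: two assignments $\alpha,\alpha':W_s\to\{0,1\}$ get the same type precisely when, for every completion $\beta$ of the remaining variables, $\alpha\cup\beta$ satisfies all clauses of $\phi$ exactly when $\alpha'\cup\beta$ does. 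Since the only interaction between $\bar G_s$ and the rest of $T$ passes through the labels (vertices of equal label have identical future adjacencies, and the relabelings in the composition operators of Definition~\ref{df:parsetree} act on labels linearly over $\GFii$), this future interaction is governed entirely by label vectors in $\GFii^{t^{+}}$ and $\GFii^{t^{-}}$.

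Concretely I would let the type be $\tau=(V^{+},V^{-},[S])$, two subspaces together with the class of a \emph{demand set}. Here $V^{+}=\langle\,lab^{+}(w):\alpha(w)=1\,\rangle\subseteq\GFii^{t^{+}}$ is spanned by the positive labels of the currently \emph{true} variables and $V^{-}=\langle\,lab^{-}(w):\alpha(w)=0\,\rangle\subseteq\GFii^{t^{-}}$ by the negative labels of the \emph{false} ones; together they determine exactly which future clauses $\alpha$ already helps to satisfy, because a future clause of positive label $q^{+}$ and negative label $q^{-}$ is hit by $\alpha$ iff $q^{+}\not\perp V^{+}$ or $q^{-}\not\perp V^{-}$ (orthogonality over $\GFii$, with $V^{+},V^{-}$ kept in the current label space). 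The demand $S\subseteq\GFii^{t^{+}}\times\GFii^{t^{-}}$ is the set of labels of those clauses of $C_s$ that $\alpha$ has \emph{not} yet satisfied; a completion described by its own spans $(P,N)$ finishes all of them iff $S\cap(P^{\perp}\times N^{\perp})=\emptyset$. The decisive point, which I would prove as a lemma, is that $S$ need only be remembered up to the equivalence identifying demand sets that impose the same such condition on all future $(P,N)$, and that one subspace's worth of information captures this class; granting it, every type is coded by three subspaces of a space of dimension at most $t$. As the number of subspaces of $\GFii^{t}$ is at most $2^{t(t+1)/2}$, there are at most $2^{3t(t+1)/2}$ types.

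I would then specify the updates. Each leaf carries a single vertex of label $\{1\}$ in both $T^{+}$ and $T^{-}$: a variable leaf contributes two types (true and false, with the corresponding one-dimensional span and empty demand), a clause leaf contributes one type recording that its label is so far unsatisfied. At a relabeling inside a composition node the labels are mapped by the associated linear maps, so the stored subspaces are replaced by their images and the demand is relabeled; by Gaussian elimination this costs $\ca O(t^{3})$ per stored object. At a $t$-labeling composition node (Definition~\ref{def:ljoin}) one merges the two child tables: the newly added cross edges may satisfy a left clause through a true/false variable on the right and vice versa, which is again an orthogonality test of one side's demand against the other side's spans $V^{+},V^{-}$; the surviving demands are combined and the contribution spans are replaced by the sums $V^{+}_{1}+V^{+}_{2}$ and $V^{-}_{1}+V^{-}_{2}$. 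Finally $\numsat(\phi)$ is read at the root as the total count over all types with empty demand. With at most $2^{3t(t+1)/2}$ table entries, $\ca O(t^{3})$ linear algebra per entry, and $\ca O(|\phi|)$ nodes, the running time is $\ca O\bigl(t^{3}\cdot 2^{3t(t+1)/2}\cdot|\phi|\bigr)$.

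The step I expect to be hardest, and would treat most carefully, is the faithful yet compact bookkeeping of the clause demand: showing that the set of as-yet-unsatisfied clause labels must be retained only up to an equivalence captured by a single subspace's worth of data\,---\,so that the three-subspace count, and hence the genuine single-exponential bound $2^{3t(t+1)/2}$, holds\,---\,while proving that this reduced demand transforms correctly under relabelings and merges correctly at composition nodes. A secondary difficulty is realizing the composition-node merge within the claimed time rather than the naive quadratic $2^{3t(t+1)}$, which calls for organizing the combination to be essentially linear in the table size by exploiting that the contribution spans combine by subspace sum and that demand-satisfaction is a monotone operation on the subspace lattice. Correctness throughout rests on the interchangeability lemma\,---\,that the combined type at a node depends only on the types, not the identities, of its two children\,---\,which follows because all future adjacencies, hence all future satisfactions, are functions of the label vectors alone and the composition operators act on those vectors linearly.
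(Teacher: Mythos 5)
The decisive step of your proposal --- that the demand set $S$ of as-yet-unsatisfied clause labels ``need only be remembered up to an equivalence \ldots captured by one subspace's worth of information'' --- is exactly the step that fails, and it is not a repairable technicality. The equivalence you define (two demand sets are identified when they impose the same condition on every future pair of spans $(P,N)$) is essentially trivial: already in the unsigned case, for a nonzero label $q$ the test subspace $P=\langle q\rangle^{\perp}$ satisfies $P^{\perp}=\langle q\rangle$, so ``$S\cap P^{\perp}\neq\emptyset$'' holds iff $q\in S$, and hence $S$ is fully recoverable from its equivalence class. The number of classes is therefore about $2^{2^{t}}$, not $2^{t(t+1)/2}$, and your type count --- and with it the single-exponential bound --- collapses. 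This is precisely the trap the paper warns about: carrying over the clique-width-style bookkeeping (``which label classes still contain an unsatisfied clause'') to rank-width, where label classes are $2^{t}$ many vectors rather than $t$ many colours, gives a doubly-exponential dependency. Your first two components $V^{+},V^{-}$ coincide with the paper's $\Sigma^{+},\Sigma^{-}$, but the third component cannot be an invariant of the assignment at all.

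The paper closes this gap with a different mechanism that your proposal does not contain: an \emph{expectation} (guessing) scheme. Instead of compressing the demand into a function of the partial assignment, the table at node $z$ is indexed by quadruples $(\Sigma^{+},\Sigma^{-},\Pi^{+},\Pi^{-})$ of subspaces, where $\Pi^{+},\Pi^{-}$ are \emph{guessed} subspaces that the not-yet-processed part of the formula is expected to span; the entry counts assignments for which every internally unsatisfied clause has label non-orthogonal to the guess. One assignment is thus counted in many entries (one per compatible guess), which is harmless because at an internal node the child expectations $\Pi_{x},\Pi_{y}$ are \emph{uniquely determined} from the parent's guess $\Pi_{z}$ and the sibling's spans via the transposed relabelings, e.g.\ $\Pi_x^{+}=\langle f_1^{+T}(\Pi_z^{+})\cup g^{+}(\Sigma_y^{+})\rangle$, so no assignment is double-counted within a single entry; at the root one reads off the entries with $\Pi^{+}=\Pi^{-}=\emptyset$. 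This also resolves your secondary worry about the merge cost: the combination loops only over the six free subspaces $\Sigma_x^{\pm},\Sigma_y^{\pm},\Pi_z^{\pm}$, and with the sharper bound $S(t)\le 2^{t(t+1)/4}$ on the number of subspaces of $\GFii^{t}$ (not the $2^{t(t+1)/2}$ you use) this gives $\ca O(t^{3}\cdot 2^{3t(t+1)/2})$ per node. Without the expectation idea, or some equivalent replacement for your false compression lemma, the proposed proof does not establish the theorem.
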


\subsection{Informal notes}

Our algorithm (see Algorithm~\ref{alg:nsat}) proving Theorem~\ref{thm:nsat}
applies the dynamic programming paradigm on the parse trees of the formula
graph $F_\phi$ (constructed by Theorem~\ref{thm:signedrwalg}).
This is, on one hand, a standard approach utilized also by
Fischer, Makowsky and Ravve \cite{FMR07}.
On the other hand, however, comparing to \cite{FMR07}
we achieve an exponential runtime speedup in terms of rank-width.
This significant improvement has two main sources (necessary on both sides):
\begin{itemize}
\item
We heavily apply the basic calculus and tools of linear algebra in the algorithm
(which is indeed natural in view of the algebraic definition of rank-width).
See the details in Subsection~\ref{sub:linalg}.
\item
Our dynamic programming algorithm is built upon the idea of an
``expectation'' (when processing the parse tree of the input) --
in addition to the information recorded about a partial solution
processed so far,
we also record what is expected from a complementary partial solution
coming from the unprocessed part of the input.
\end{itemize}

Especially the second point deserves an informal explanation before giving
a formal description in Definition~\ref{def:signature}.\,II.
The background idea is that the amount of information one has to remember
about a partial solution shrinks a lot if one ``knows'' what the complete
solution will look like.
Such saving sometimes largely exceeds the cost of keeping an exhaustive list
of all possible future ``shapes'' of complete solutions.
This is also our case where the application is quite natural --
we may exhaustively preprocess the values of some variables in advance.

The idea of using an ``expectation'' to speed up a dynamic
programming algorithm on a rank-decomposition
has first appeared in Bui-Xuan, Telle and Vatshelle~\cite{BTV10}
in relation to solving the dominating set on graphs of bounded rank-width.
This concept has been subsequently formalized and generalized
by the authors in \cite{GH09} (in the so called PCE scheme formalism).
Furthermore, it has also been shown \cite[Proposition~5.1]{GH09} that
use of the ``expectation'' concept is unavoidable
to achieve speed up for the dominating set problem.

Unfortunately, we cannot simply refer the formalism of \cite{GH09} here
since it was designed for optimization, and not enumeration, problems.
We thus have to describe it again from scratch in our Algorithm~\ref{alg:nsat}.

\subsection{Supplementary technical concepts}
\label{sub:linalg}

This part describes several technical concepts 
needed to formulate all details of coming Algorithm~\ref{alg:nsat}.
It may be skipped during the first reading.

A useful algebraic concept is that of orthogonality. 
We say that labeling $\ell$ is
{\em orthogonal} to a set of labelings $X$
if $\ell$ has even intersection with every element of $X$
(i.e.\ the scalar product of the labeling vectors is $0$ over $\GFii$). 
Remember that for $t$-labeling parse trees, 
in order for two vertices become adjacent by the join operation $\ljoin$, 
their labelings need to have odd intersection, i.e.\ to be non-orthogonal.
The power of orthogonality comes from the following rather trivial claim 
occurring already in \cite{BTV10,GH09}:

\begin{lemma}
\label{lem:ortspace}
Assume $t$-labeled graphs $\bar G$ and $\bar H$,
and arbitrary $X\subseteq V(\bar G)$ and $y\in V(\bar H)$.
In the join graph $\bar G\pplusoid\bar H$, %
the vertex $y$ is adjacent to some vertex in $X$
if and only if the vector subspace spanned by the $\bar G$-labelings of
the vertices of $X$
is not orthogonal to the $\bar H$-labeling vector of~$y$ in $\GFii^t$.
\end{lemma}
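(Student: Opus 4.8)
The statement is an ``if and only if'' about adjacency in a join graph, so the
plan is to translate the combinatorial adjacency condition into the algebraic
language of $\GFii$ and then invoke a standard fact about orthogonal
complements. First I would recall the defining rule of the join from
Definition~\ref{def:ljoin}: for $u\in V(\bar G)$ and $y\in V(\bar H)$, the edge
$\{u,y\}$ lies in $\bar G\pplusoid\bar H$ precisely when the labeling vectors
satisfy $lab^{\bar G}(u)\cdot lab^{\bar H}(y)=1$ over $\GFii$, i.e.\ the two
label vectors are non-orthogonal. Thus ``$y$ is adjacent to \emph{some} vertex
of $X$'' is equivalent to ``there exists $u\in X$ with
$lab^{\bar G}(u)\cdot lab^{\bar H}(y)=1$.''

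The main step is to pass from a single witnessing vertex to the whole subspace
spanned by the $\bar G$-labelings of $X$. Let $S=\langle\, lab^{\bar G}(u):u\in
X\,\rangle$ be that subspace of $\GFii^t$, and write $v=lab^{\bar H}(y)$. I
would argue both directions. If some $u\in X$ has $lab^{\bar G}(u)\cdot v=1$,
then $v$ fails to be orthogonal to that particular generator of $S$, so $v$ is
not orthogonal to $S$. Conversely, suppose $v$ is not orthogonal to $S$; then
there is a vector $s\in S$ with $s\cdot v=1$. Writing $s$ as a $\GFii$-linear
combination of the generators $lab^{\bar G}(u)$ with $u\in X$ and using
bilinearity of the scalar product over $\GFii$, the value $s\cdot v$ is the sum
(over $\GFii$) of the individual products $lab^{\bar G}(u)\cdot v$; since this
sum equals $1$, at least one summand must equal $1$, yielding a vertex $u\in X$
non-orthogonal to $v$, hence adjacent to $y$. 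This is the crux: orthogonality
to an entire subspace is detected already on a spanning set, which is exactly
the bilinearity argument just sketched.

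The only place needing a little care is the degenerate situation $X=\emptyset$,
where $S=\{\mathbf 0\}$; then $v$ is orthogonal to $S$ and there is indeed no
vertex of $X$ adjacent to $y$, so both sides are vacuously false and the
equivalence still holds. With that boundary case dispatched, the two
implications above combine to give the claimed biconditional. I expect no real
obstacle here—the content is genuinely the ``rather trivial'' bilinearity fact
noted in the statement—so the whole argument is a couple of lines once the join
rule of Definition~\ref{def:ljoin} is rewritten in scalar-product form.
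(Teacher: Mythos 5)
Your proof is correct and is exactly the standard bilinearity argument the paper has in mind: the paper itself gives no proof of Lemma~\ref{lem:ortspace}, dismissing it as a ``rather trivial claim'' from \cite{BTV10,GH09}, and your two implications (a witnessing generator is already non-orthogonal; conversely a non-orthogonal $s\in S$ expands over $\GFii$ as a subset-sum of generators, forcing at least one summand $lab^{\bar G}(u)\cdot v=1$) together with the $X=\emptyset$ boundary case supply precisely the omitted details. No gaps.
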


In view of Lemma~\ref{lem:ortspace},
the following result will be useful in deriving the complexity of our algorithm.
\begin{lemma}[\cite{GR69}, cf.~{\cite[Proposition~6.1]{GH09}}]
\label{lem:subnum}
\smallskip
The number $S(t)$ of subspaces of the binary vector space $\GFii^t$
satisfies $S(t)\leq 2^{t(t+1)/4}$ for all $t\geq12$.
\end{lemma}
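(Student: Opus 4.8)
The plan is to bound the number $S(t)$ of subspaces of $\GFii^t$ by counting subspaces according to their dimension and then summing. For a fixed dimension $k$ with $0\le k\le t$, the number of $k$-dimensional subspaces of $\GFii^t$ is exactly the Gaussian binomial coefficient $\binom{t}{k}_2 = \prod_{i=0}^{k-1}\frac{2^{t-i}-1}{2^{k-i}-1}$, so the total is $S(t)=\sum_{k=0}^{t}\binom{t}{k}_2$. I would first record a clean upper bound on each Gaussian binomial. A convenient one is $\binom{t}{k}_2 \le 2^{k(t-k)}\cdot c$ for an absolute constant $c=\prod_{j\ge1}(1-2^{-j})^{-1}<4$, which follows by pairing each factor $2^{t-i}-1 < 2^{t-i}$ in the numerator with $2^{k-i}-1 \ge 2^{k-i-1}$ (handled more carefully via the infinite product) to extract the dominant power $2^{k(t-k)}$.

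Next I would control the exponent $k(t-k)$, which is maximized at $k=t/2$, giving $k(t-k)\le t^2/4$. Thus each summand is at most $c\cdot 2^{t^2/4}$, and summing over the $t+1$ values of $k$ gives the crude bound $S(t)\le c(t+1)\,2^{t^2/4}$. The target bound is $S(t)\le 2^{t(t+1)/4}=2^{t^2/4}\cdot 2^{t/4}$, so it remains to absorb the polynomial factor $c(t+1)$ into the extra exponential slack $2^{t/4}$. This is where the hypothesis $t\ge 12$ enters: for $t\ge 12$ we have $2^{t/4}\ge 2^3=8$, and more importantly $2^{t/4}$ grows faster than any fixed polynomial, so checking $c(t+1)\le 2^{t/4}$ for all $t\ge 12$ reduces to verifying the inequality at the threshold and noting monotonicity of the gap thereafter.

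The main obstacle is getting the constant arithmetic tight enough that the comparison $c(t+1)\le 2^{t/4}$ actually holds from $t=12$ onward rather than only for much larger $t$; a loose constant $c$ in the Gaussian-binomial estimate would push the valid range up. I would therefore sharpen the per-dimension bound, using that the product $\prod_{j\ge1}(1-2^{-j})^{-1}$ converges rapidly (its value is about $3.46$), and possibly split off the two extreme terms $k=0$ and $k=t$ (each equal to $1$) so that the remaining $t-1$ terms carry the main mass with a slightly better constant. With $c<3.5$ one needs $3.5(t+1)\le 2^{t/4}$; at $t=12$ the right side is $8$ while the left is about $45.5$, so the crude split fails and the estimate must be refined.

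Consequently I would not settle for the maximal-exponent bound on every term but instead exploit that only the central terms are near $2^{t^2/4}$ while terms away from $k=t/2$ decay geometrically. Concretely, writing $k=t/2+j$ gives $k(t-k)=t^2/4-j^2$, so $\sum_k 2^{k(t-k)} \le 2^{t^2/4}\sum_{j} 2^{-j^2}$, and the geometric-type sum $\sum_{j\in\mathbb{Z}}2^{-j^2}$ is bounded by a small absolute constant (under $3$). Combining this with the per-term constant $c$ yields $S(t)\le C\cdot 2^{t^2/4}$ for an absolute constant $C$ with \emph{no} factor of $t$, after which $C\le 2^{t/4}$ holds comfortably for $t\ge12$. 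The final step is then purely the verification of this numerical inequality at $t=12$ together with monotonicity, completing the bound $S(t)\le 2^{t(t+1)/4}$.
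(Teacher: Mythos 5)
Your argument is correct, but note that the paper does not actually prove this lemma at all: it is quoted from Goldman--Rota~\cite{GR69} via \cite[Proposition~6.1]{GH09}, so any self-contained derivation is necessarily ``a different route.'' Your route is the standard one and it does close: $S(t)=\sum_{k=0}^t\binom{t}{k}_2$, the bound $\binom{t}{k}_2<c\cdot2^{k(t-k)}$ with $c=\prod_{j\ge1}(1-2^{-j})^{-1}\approx3.463$ follows exactly as you indicate (bounding the numerator by $2^{kt-k(k-1)/2}$ and the denominator below by $2^{k(k+1)/2}\prod_{j\ge1}(1-2^{-j})$), and your key refinement --- substituting $k=t/2+j$ so that $k(t-k)=t^2/4-j^2$ and summing $\sum_j2^{-j^2}$ instead of multiplying by $t+1$ --- is precisely what is needed, since you correctly observe that the crude $c(t+1)2^{t^2/4}$ bound fails at $t=12$. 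The one place you should tighten the write-up is the very last numerical step: with the loose constants you state ($c<3.5$ and $\sum_j2^{-j^2}<3$) the product is $10.5>8=2^{12/4}$, so the case $t=12,13$ would not be covered; you need the sharper values $c\approx3.463$ and $\sum_{j\in\frac12\mathbb{Z}}2^{-j^2}\approx2.13$ (the half-integer shift for odd $t$ gives essentially the same sum), whose product is about $7.4<8$, after which $S(t)<2^{3+t^2/4}\le2^{t(t+1)/4}$ holds for all $t\ge12$. So the threshold $t\ge12$ is in fact tight for this method rather than ``comfortable,'' and the final inequality should be verified with those explicit constants rather than the rounded ones.
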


We recall from Definition~\ref{def:formgraph} the signed graph $F_\phi$ of a
formula $\phi$ on a vertex set $V(F_\phi)=W\cup C$ where $W$ is the set of
variables and $C$ is the set of clauses of $\phi$. An \emph{assignment} is
then a mapping $\nu: W\to\{0,1\}$. In the course of computation of our
algorithm we will need to remember some local information about all satisfying
assignments for $\phi$. The information to be remembered for each such
assignment is formally described by the following definition.

\begin{definition}\rm
\label{def:signature}
Consider an arbitrary $(t^+,t^-)$-labeling $\dbar F_1=(F_1,lab^+,lab^-)$
of a
signed subgraph $F_1\subseteq F_\phi$, 
and any partial assignment $\nu_1: V(F_1)\cap W\to\{0,1\}$.
We say that $\nu_1$ is an {\em assignment of shape}
$(\Sigma^+,\Sigma^-,\Pi^+,\Pi^-)$ in $\dbar F_1$ if
\begin{enumerate}[I.]\vspace*{-3pt}
\item
$\Sigma^+$ is the subspace of $\GFii^t$ generated by the label vectors
$lab^+(\nu_1^{-1}(1))$ and
$\Sigma^-$ is the subspace of $\GFii^t$ generated by $lab^-(\nu_1^{-1}(0))$,
and
\item
$\Pi^+,\Pi^-$ are subspaces of $\GFii^t$ such that,
for every clause $c\in V(F_1)\cap C$,
at least one of the following is true
\begin{itemize}
\item
$c$ is adjacent to some vertex from $\nu_1^{-1}(1)$ in $F_1^+$
or to some vertex from $\nu_1^{-1}(0)$ in $F_1^-$, or
\item
the label vector $lab^+(c)$ is not orthogonal to $\Pi^+$
or $lab^-(c)$ is not orthogonal to~$\Pi^-$
(cf.~Lemma~\ref{lem:ortspace}).
\end{itemize}
\end{enumerate}
\end{definition}

Very informally saying, I. states which true literals in $F_1$ (w.r.t.\
$\nu_1$) are available to satisfy clauses of $F_\phi$, and II. stipulates
that every clause in $F_1$ is satisfied by a true literal in $F_1$ or is
expected to be satisfied by some literal in $F_\phi-V(F_1)$. \nb{mention
  the difference between $\Pi^+,\Pi^-$?} Note that one partial
assignment $\nu_1$ could be of several distinct shapes, which differ in
$\Pi^+,\Pi^-$. (This is true even for complete assignments.) Moreover, there
is no requirement on $\Pi^+$ and $\Pi^-$ to have an empty intersection with
$\Sigma^+$, $\Sigma^-$ and each other. 
The trivial useful properties of assignment shapes are:

\begin{proposition}
\label{pro:signature}
We consider a CNF formula $\phi$ with the variable set $W$,
and any assignment $\nu:W\to\{0,1\}$.
Assume $\dbar F_1,\dbar F_2$ are $(t^+,t^-)$-labeled graphs such that
$F_\phi=\dbar F_1\ljoin\dbar F_2$,
and let $\nu_1,\nu_2$ denote the restrictions of $\nu$ to $\dbar F_1,\dbar
F_2$.
\begin{enumerate}[a)]\vspace*{-3pt}
\item
The assignment $\nu$ is satisfying for~$\phi$ if, and only if,
there exist subspaces $\Sigma^+,\Sigma^-,\Pi^+,\Pi^-$ of $\GFii^t$ 
such that $\nu_1$ is of shape $(\Sigma^+,\Sigma^-,\Pi^+,\Pi^-)$ 
in $\dbar F_1$
and $\nu_2$ is of shape $(\Pi^+,\Pi^-,\Sigma^+,\Sigma^-)$ in~$\dbar F_2$.
\item 
If, in $\dbar F_1$, $\nu_1$ is of shape
$(\Sigma_0^+,\Sigma_0^-,\Pi_0^+,\Pi_0^-)$ and, at the same time, $\nu_1$ is of shape
$(\Sigma_1^+,\Sigma_1^-,\Pi_1^+,\Pi_1^-)$, then $\Sigma_0^+=\Sigma_1^+$ and
$\Sigma_0^-=\Sigma_1^-$.
\item 
The assignment $\nu_1$ is satisfying for~$\phi_1$
-- the subformula of $\phi$ represented by~$F_1$ if, and only if,
$\nu_1$ is of shape $(\Sigma^+,\Sigma^-,\emptyset,\emptyset)$
for some subspaces $\Sigma^+,\Sigma^-$.
\end{enumerate}
\qed\end{proposition}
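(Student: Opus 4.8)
The plan is to reduce all three parts to the single algebraic fact of Lemma~\ref{lem:ortspace}, which converts the combinatorial statement ``clause $c$ is adjacent to a true literal on the opposite side of the join'' into the linear-algebraic statement ``$lab(c)$ is not orthogonal to the span of the labels of those true literals.'' The enabling observation is that in $F_\phi=\dbar F_1\ljoin\dbar F_2$ the only new edges are the cross edges, and by Definition~\ref{def:ljoin} applied separately to the positive and negative parts (recall $\dbar G_1\ljoin\dbar G_2=(\dbar G_1^+\ljoin\dbar G_2^+)\cup(\dbar G_1^-\ljoin\dbar G_2^-)$) such a cross edge appears exactly when the corresponding label vectors have odd intersection over $\GFii$. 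I would dispatch parts~b) and~c) first, as they are immediate from the definition, and then spend the effort on part~a).

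For part~b), I would observe that condition~I of Definition~\ref{def:signature} pins down $\Sigma^+$ and $\Sigma^-$ completely: they are \emph{by definition} the subspaces generated by $lab^+(\nu_1^{-1}(1))$ and $lab^-(\nu_1^{-1}(0))$ respectively, hence functions of $\nu_1$ and the labeling alone. The only freedom in a shape lies in the ``expectation'' components $\Pi^+,\Pi^-$ of condition~II, so any two shapes of the same $\nu_1$ must agree on the $\Sigma$'s. For part~c) I would instantiate the shape with $\Pi^+=\Pi^-=\emptyset$ and use that every vector is orthogonal to the zero subspace; consequently the second bullet of condition~II can never hold, and condition~II collapses to ``every clause $c\in V(F_1)\cap C$ is adjacent to a true literal within $F_1$'', which is precisely the statement that $\nu_1$ satisfies the subformula $\phi_1$. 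Since $\Sigma^+,\Sigma^-$ always exist, both implications follow.

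For part~a) I would choose the witnessing subspaces canonically: set $\Sigma^+,\Sigma^-$ to the label-spans of $\nu_1$ (as in condition~I for $\dbar F_1$) and $\Pi^+,\Pi^-$ to the label-spans of $\nu_2$ (as in condition~I for $\dbar F_2$), which makes condition~I hold automatically for both shapes, noting the deliberate swap of roles in $\nu_2$'s shape $(\Pi^+,\Pi^-,\Sigma^+,\Sigma^-)$. The heart of the argument is then a clause-by-clause equivalence. Fix a clause $c$ and suppose first that $c\in V(F_1)\cap C$. A true literal satisfying $c$ either has its variable in $F_1$ — giving an internal edge and hence the first bullet of condition~II for $\nu_1$ — or in $F_2$, giving a cross edge. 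In the latter case Lemma~\ref{lem:ortspace}, applied with $\bar G=\dbar F_2^+$ and $X=\nu_2^{-1}(1)$ (positive sign) or with $\dbar F_2^-$ and $X=\nu_2^{-1}(0)$ (negative sign), says this cross edge exists iff $lab^+(c)$ is not orthogonal to $\Pi^+$, respectively $lab^-(c)$ is not orthogonal to $\Pi^-$ — exactly the second bullet of condition~II for $\nu_1$. Thus ``$c$ is satisfied by $\nu$'' is \emph{equivalent} to condition~II for $\nu_1$ at $c$; the symmetric computation for $c\in V(F_2)\cap C$ gives condition~II for $\nu_2$. Reading this equivalence over all clauses yields both directions at once: $\nu$ satisfies $\phi$ iff every clause is satisfied iff condition~II holds at every clause for the appropriate $\nu_i$, i.e.\ iff the two stated shapes are valid.

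The only real obstacle is bookkeeping, not depth: one must keep straight that the $\Sigma$'s are \emph{actual} label-spans for $\nu_1$ but serve as \emph{expectations} for $\nu_2$ (and symmetrically for the $\Pi$'s), and one must apply Lemma~\ref{lem:ortspace} separately on the positive and negative subgraphs so that the signed join is handled sign-by-sign. Once the swap of roles and the $+/-$ split are tracked carefully, every case is a direct translation between adjacency and non-orthogonality, and no nontrivial estimate is required.
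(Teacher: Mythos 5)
Your proof is correct, and it fills in exactly the argument the authors evidently intend: the paper itself supplies no proof (the proposition is presented as a list of ``trivial useful properties'' with an immediate \qed), and the only available route is the one you take --- condition~I of Definition~\ref{def:signature} forces the $\Sigma$'s (giving b), the zero expectation subspaces kill the second bullet of condition~II (giving c), and Lemma~\ref{lem:ortspace} applied sign-by-sign translates ``satisfied by a literal across the join'' into non-orthogonality to the canonical $\Pi^\pm$ (giving a). Your observation that the swap in $\nu_2$'s shape forces $\Pi^\pm$ to be the actual label-spans of $\nu_2$, so that the existential quantifier in a) ranges over a single candidate tuple and the clause-by-clause equivalence settles both directions at once, is the one point that genuinely needed saying, and you said it.
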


\subsection{The dynamic processing algorithm}

We now return to our Theorem~\ref{thm:nsat},
considering a $(t^+,t^-)$-labeling parse tree $T_\phi$ of a given formula
graph $F_\phi$.
The core of our bottom-up dynamic processing of $T_\phi$ is as follows:
At every node $z$ such that the subtree of $T_\phi$ rooted at $z$
parses a $(t^+,t^-)$-labeled graph $\dbar F_z$,
we record an integer-valued array $\atable_z$ indexed by
all the quadruples of subspaces of $\GFii^t$, where $t=\max(t^+,t^-)$.
The value of the entry $\atable_z[\Sigma^+,\Sigma^-,\Pi^+,\Pi^-]$ is equal to
the number of variable assignments in $F_z\subseteq F_\phi$
that are of the shape $(\Sigma^+,\Sigma^-,\Pi^+,\Pi^-)$ in $\dbar F_z$
(cf.~Definition~\ref{def:signature}).

For a subset $X\subseteq\GFii^t$, let $\langle X\rangle$ denote the vector
subspace of $\GFii^t$ spanned by the points of~$X$.
If $f$ is a relabeling, i.e.\ a linear transformation defined by a binary
matrix $\mx R_f$, then $f(X)$ denotes the image of $X$ under $f$,
and $f^T(X)$ denotes the image of $X$ under the {\em transposed} 
relabeling given by $\mx R_f^T$.

\begin{algorithm}[Theorem~\ref{thm:nsat}]
\label{alg:nsat}
Given is a CNF formula $\phi$ and
a signed $(t^+,t^-)$-labeling parse tree $T_\phi$ of the formula graph $F_\phi$.
\begin{enumerate}[1.]
\item
We initialize all entries of $\atable_z$ for $z\in V(T_\phi)$ to~$0$.
\smallskip\item
We process all nodes of $T_\phi$ in the leaves-to-root order as follows.

\smallskip
\begin{enumerate}[a)]\parskip1pt
\item\label{it:alg2a}
At a clause leaf $c$ of $T_\phi$,
we set ~$\atable_c[\emptyset,\emptyset,\Pi^+,\Pi^-]\leftarrow1$~
for all subspaces $\Pi^+,\Pi^-$ such that at least one of them is not
orthogonal to the label vector of $\{1\}$ 
(and $\emptyset$ stands for the zero subspace).
\item\label{it:alg2b}
At a variable leaf $\ell$ of $T_\phi$,
we set ~$\atable_\ell[\langle\{1\}\rangle,\emptyset,\Pi^+,\Pi^-]\leftarrow1$~
and ~$\atable_\ell[\emptyset,\langle\{1\}\rangle,\Pi^+,\Pi^-]\leftarrow1$~
for all pairs $\Pi^+,\Pi^-$.
\smallskip
\item\label{it:alg2c}
Consider an internal node $z$ of $T_\phi$, with the left son $x$ and the
right son $y$ such that $\atable_x$ and $\atable_y$ have already been
computed.

\begin{itemize}
\item
Let the composition operators at $z$ in the labeling parse trees
$T_\phi^+,T_\phi^-$ be as 
$\dbar F_z^+=\dbar F_x^+\pplus{g^+}{f_1^+,f_2^+}\dbar F_y^+$
and $\dbar F_z^-=\dbar F_x^-\pplus{g^-}{f_1^-,f_2^-}\dbar F_y^-$
(cf.\ Definition~\ref{def:sigparse} for $T_\phi$).

\item
We loop exhaustively over all indices to $\atable_x,\atable_y,\atable_z$,
i.e. over all $12$-tuples of subspaces
$\Sigma_x^+,\Sigma_x^-,\Pi_x^+,\Pi_x^-$,
$\Sigma_y^+,\Sigma_y^-,\Pi_y^+,\Pi_y^-$,
$\Sigma_z^+,\Sigma_z^-,\Pi_z^+,\Pi_z^-$ of $\GFii^t$.
If all the following are true
\begin{itemize}\vskip2pt
\item[]
$\Sigma_z^+=\big\langle f_1^+(\Sigma_x^+)\cup f_2^+(\Sigma_y^+)\big\rangle$
and
$\Sigma_z^-=\big\langle f_1^-(\Sigma_x^-)\cup f_2^-(\Sigma_y^-)\big\rangle$,
\item[]
$\Pi_x^+=\big\langle f_1^+{}^T(\Pi_z^+)\cup g^+(\Sigma_y^+)\big\rangle$
and
$\Pi_x^-=\big\langle f_1^-{}^T(\Pi_z^-)\cup g^-(\Sigma_y^-)\big\rangle$,
\item[]
$\Pi_y^+=\big\langle f_2^+{}^T(\Pi_z^+)\cup g^+{}^T(\Sigma_x^+)\big\rangle$
and
$\Pi_y^-=\big\langle f_2^-{}^T(\Pi_z^-)\cup g^-{}^T(\Sigma_x^-)\big\rangle$,
\end{itemize}\vskip2pt

then we add the product 
$\atable_x[\Sigma_x^+,\Sigma_x^-,\Pi_x^+,\Pi_x^-]\cdot
	\atable_y[\Sigma_y^+,\Sigma_y^-,$ $\Pi_y^+,\Pi_y^-]$
to the table entry $\atable_z[\Sigma_z^+,\Sigma_z^-,$ $\Pi_z^+,\Pi_z^-]$.
\end{itemize}
\end{enumerate}

\smallskip\item\label{it:alg3}
We sum up all the entries $\atable_r[\Sigma^+,\Sigma^-,\emptyset,\emptyset]$
where $r$ is the root of $T_\phi$ and $\Sigma^+,\Sigma^-$ are arbitrary
subspaces of $\GFii^t$.
This is the resulting number of satisfying assignments of $\phi$.
\end{enumerate}
\end{algorithm}

\begin{proof}[Algorithm~\ref{alg:nsat}\,/\,Theorem~\ref{thm:nsat}]
The task is to prove that
the computed value $\atable_z[\Sigma^+,\Sigma^-,\Pi^+,\Pi^-]$ 
is indeed equal to the number of assignments in $\dbar F_z$
that are of the shape $(\Sigma^+,\Sigma^-,\Pi^+,\Pi^-)$.
This is done by structural induction on $z$ ranging 
from the leaves of $T_\phi$ to its root.
Then, in step \ref{it:alg3} of our algorithm,
the computed number of satisfying assignments of $\phi$ is correct
by Proposition~\ref{pro:signature}\,b,c.

\begin{description}
\item[\ref{it:alg2a}.]
If $z=c$ where $c$ is a clause leaf,
then $F_c$ defines a formula with one empty (so far false) clause $c$.
There is only one possible assignment in $\dbar F_c$.
In order to satisfy $c$, its labeling $lab^+(c)=lab^-(c)=\{1\}$
should not be orthogonal to expected $\Pi^+$ or $\Pi^-$
(Definition~\ref{def:signature}.\,II),
as done in step \ref{it:alg2a}.

\item[\ref{it:alg2b}.]
If $z=\ell$ where $\ell$ is a variable leaf,
then $F_\ell$ defines a formula with one variable and no clause.
There are two assignments of $\ell$ and no requirement on $\Pi^+,\Pi^-$ from
Definition~\ref{def:signature}.\,II.
Hence these two assignments contribute $1$ each to all the indicated table entries
by Definition~\ref{def:signature}.\,I.

\medskip
\item[\ref{it:alg2c}.]
This is the hard core of our proof.
By induction both $\atable_x,\atable_y$ already contain the correct values.
Assume we have a partial assignment $\nu$ in $\dbar F_z$ of shape
$(\Sigma_z^+,\Sigma_z^-,\Pi_z^+,\Pi_z^-)$.
Then $\nu$ defines partial assignments $\nu_x$ in $\dbar F_x$ 
and $\nu_y$ in $\dbar F_y$ which, in turn,
uniquely determine the corresponding subspaces $\Sigma_x^+,\Sigma_x^-$ and
$\Sigma_y^+,\Sigma_y^-$ by Proposition~\ref{pro:signature}\,b).
It follows from Definition~\ref{def:signature}.\,II of a shape 
and from Definition~\ref{df:parsetree} of the composition operators
$\pplus{g^+}{f_1^+,f_2^+}$ and $\pplus{g^-}{f_1^-,f_2^-}$ at $z$,
that $\nu_x$ is of shape $(\Sigma_x^+,\Sigma_x^-,\Pi_x^+,\Pi_x^-)$
 for some $\Pi_x^+,\Pi_x^-$ if
\begin{enumerate}[i.]\smallskip
\item
for every clause $c$ not adjacent to $\nu_x^{-1}(1)$ in $F_x^+$,
nor to $\nu_x^{-1}(0)$ in $F_x^-$,
we have that $f_1^+(lab_x^+(c))$ is not orthogonal to $\Pi_z^+$
or $f_1^-(lab_x^-(c))$ is not orthogonal to $\Pi_z^-$
(informally, $c$ will be satisfied by the expectation at $z$ after
relabeling),
\item
or $lab_x^+(c)$ is not orthogonal to $g^+(\Sigma_y^+)$
or $lab_x^-(c)$ is not orthogonal to $g^-(\Sigma_y^-)$
(informally, $c$ is satisfied by a true literal coming from $\dbar F_y$ 
in the labeling composition at $z$).
\end{enumerate}\smallskip

Note that, e.g., $f_1^+(lab_x^+(c))$ is not orthogonal to 
$\mx v\in\Pi_z^+$ iff 
$$1=f_1^+(lab_x^+(c))\times \mx v^T=
	\left(lab_x^+(c)\times \mx R_{f_1^+}\right)\times \mx v^T=
	lab_x^+(c)\times \left(\mx v\times\mx R_{f_1^+}^{\>T}\right)^T
.$$

Hence putting the two disjoint alternatives i,ii for $c$ together,
we see that $lab_x^+(c)$ should not be orthogonal to 
$f_1^+{}^T(\Pi_z^+)\cup g^+(\Sigma_y^+)$
or $lab_x^-(c)$ should not be orthogonal to 
$f_1^-{}^T(\Pi_z^-)\cup g^-(\Sigma_y^-)$.
This exactly corresponds to the condition on $\Pi_x^+$ and $\Pi_x^-$
in \ref{it:alg2c}.
Analogical fact is true for $\Pi_y^+$ and $\Pi_y^-$.
Therefore, $\nu_x$ and $\nu_y$ have been accounted for in
$\atable_x$ and $\atable_y$, respectively,
and so $\nu$ is now counted in $\atable_z[\Sigma_z^+,\Sigma_z^-,$
$\Pi_z^+,\Pi_z^-]$.

\medskip
On the other hand, we have to prove that no assignment is counted
more than once in one particular entry 
$\atable_z[\Sigma_z^+,\Sigma_z^-,\Pi_z^+,\Pi_z^-]$.
This is not immediate due to a (limited) freedom in a choice of
the ``expectation'' part of shape in the previous arguments.
For $\atable_x,\atable_y$ this claim is true by induction.
Any particular partial assignment $\nu$ in $\dbar F_z$
uniquely determines $\Sigma_z^+,\Sigma_z^-$, and
$\Sigma_x^+,\Sigma_x^-$, $\Sigma_y^+,\Sigma_y^-$ as above.
Then the conditions in \ref{it:alg2c} of the algorithm
also uniquely determine $\Pi_x^+,\Pi_x^-$ and $\Pi_y^+,\Pi_y^-$
(and so their entries in $\atable_x,\atable_y$).
Hence the assignment $\nu$ is counted at most once for every
particular choice of $\Pi_z^+,\Pi_z^-$, too.
\end{description}

Lastly, we analyze the runtime of our algorithm.
Let $S(t)$ be the number of subspaces of $\GFii^t$.
Every single call to one of the steps 1, \ref{it:alg2a},
\ref{it:alg2b}, and \ref{it:alg3} of Algorithm~\ref{alg:nsat}
is proportional to the size of the table which is $\ca O\big(S(t)^4\big)$.
One call to \ref{it:alg2c} in this algorithm
actually has to loop over all $6$-tuples
$\Sigma_x^+,\Sigma_x^-$, $\Sigma_y^+,\Sigma_y^-$,
$\Pi_z^+,\Pi_z^-$ of subspaces of $\GFii^t$,
while the remaining $6$ subspaces 
$\Pi_x^+,\Pi_x^-$, $\Pi_y^+,\Pi_y^-$, $\Sigma_z^+,\Sigma_z^-$
can be computed in time $\ca O(t^3)$ each using standard algorithms
of linear algebra.
Hence this point takes time $\ca O\big(t^3\cdot S(t)^6\big)$.

For the sake of completeness, we note that there exists
\cite[Lemma~6.3]{GH09} an efficient indexing scheme for all
the subspaces of $\GFii^t$ with query time $\ca O(t^3)$.
Such a scheme can be built in time
$\ca O\big(2^{3t(t+1)/4}\cdot t^3\big)$.

Altogether, using Lemma~\ref{lem:subnum}, our Algorithm~\ref{alg:nsat} takes time
$$
\ca O\big(|V(T_\phi)|\cdot t^3\cdot S(t)^6\big) =
  \ca O\big(|V(T_\phi)|\cdot t^3\cdot 2^{6t(t+1)/4}\big) =
  \ca O\big(|\phi|\cdot t^3\cdot 2^{3t(t+1)/2}\big)
.~~\qed$$
\end{proof}

\section{Algorithm for the \maxsat problem}
\label{sec:maxsat}

The same ideas as presented in Section~\ref{sec:numsat} lead also to a
parameterized algorithm for the \maxsat optimization problem
which asks for the maximum number of satisfied clauses in a CNF formula.
We briefly describe this extension,
though we have to admit that the importance of the \maxsat algorithm 
on graphs of bounded rank-width is not as high as that of \numsat.
The reason for lower applicability is that for ``sparse'' formula graphs 
(i.e.\ those not containing large bipartite cliques) 
their rank-width is bounded iff their tree-width is bounded, 
while for dense formula graphs the satisfiability problem is easier
in general.

\begin{theorem}
\label{thm:maxsat}
There is an algorithm that, given a CNF formula $\phi$ and
a $(t^+,t^-)$-labeling parse tree of the formula graph $F_\phi$,
\smallskip
solves the \maxsat optimization problem of $\phi$ in time
\mbox{$\ca O( t^3\cdot 2^{3t(t+1)/2}\cdot|\phi|)$}
where $t=\max(t^+,t^-)$.
\end{theorem}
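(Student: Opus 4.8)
The plan is to mirror the \numsat algorithm almost verbatim, replacing the \emph{counting} of assignments of each shape by an \emph{optimization} over them. The key observation is that the notion of shape from Definition~\ref{def:signature} already records, through the ``expectation'' pair $\Pi^+,\Pi^-$, exactly which clauses of $F_z$ are either satisfied locally or are expected to be satisfied from the complementary part $F_\phi-V(F_z)$. For \maxsat we want to count satisfied clauses, so first I would set up a refined table $\atable_z$ whose entries are again indexed by quadruples $(\Sigma^+,\Sigma^-,\Pi^+,\Pi^-)$ of subspaces of $\GFii^t$, but whose value $\atable_z[\Sigma^+,\Sigma^-,\Pi^+,\Pi^-]$ now stores the \emph{maximum number of clauses of $F_z$ that are satisfied} over all partial assignments $\nu_z$ of that shape in $\dbar F_z$ (or $-\infty$ if no such assignment exists). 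The subtle point is deciding which clauses a shape ``takes credit'' for: a clause $c\in V(F_z)\cap C$ satisfied by the expectation (the second bullet of Definition~\ref{def:signature}.\,II, via non-orthogonality to $\Pi^+$ or $\Pi^-$) must \emph{not} be counted at $z$, since its satisfaction will be realized, and counted, only when the complementary literal actually appears higher in the tree. So at $z$ we count only clauses satisfied by a true literal already present inside $F_z$; clauses deferred to the expectation are counted later, by the symmetry of Proposition~\ref{pro:signature}\,a.

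The processing then follows Algorithm~\ref{alg:nsat} step by step. At a clause leaf $c$ I would initialize $\atable_c[\emptyset,\emptyset,\Pi^+,\Pi^-]\leftarrow0$ for exactly those $\Pi^+,\Pi^-$ that are non-orthogonal to the label of $\{1\}$ (the clause is not yet locally satisfied, so the count is $0$, and its satisfaction is deferred to the expectation), and I would \emph{additionally} allow the entry $\atable_c[\emptyset,\emptyset,\Pi^+,\Pi^-]\leftarrow0$ for \emph{all} $\Pi^+,\Pi^-$ -- since in \maxsat a clause may legitimately remain unsatisfied, so every shape is feasible with contribution~$0$. At a variable leaf $\ell$ I would set both entries to $0$ exactly as in \ref{it:alg2b}, there being no clauses. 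At an internal node $z$ with sons $x,y$ I would loop over the same $6$-tuples of subspaces, derive $\Pi_x^\pm,\Pi_y^\pm,\Sigma_z^\pm$ by the identical linear-algebraic formulas, and -- instead of adding a product -- set
$$
\atable_z[\Sigma_z^+,\Sigma_z^-,\Pi_z^+,\Pi_z^-]\leftarrow
   \max\Big(\atable_z[\cdots],\;\atable_x[\Sigma_x^+,\Sigma_x^-,\Pi_x^+,\Pi_x^-]+\atable_y[\Sigma_y^+,\Sigma_y^-,\Pi_y^+,\Pi_y^-]\Big).
$$
Because the sets of clauses of $F_x$ and $F_y$ are disjoint and the formulas for $\Pi_x^\pm,\Pi_y^\pm$ guarantee that a clause is charged at exactly one node (namely where an actual satisfying true literal from the sibling first joins it), the additive combination of locally-satisfied counts is sound and no clause is double-counted. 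Finally, in the root step I would report $\max_{\Sigma^+,\Sigma^-}\atable_r[\Sigma^+,\Sigma^-,\emptyset,\emptyset]$, where $\Pi^+=\Pi^-=\emptyset$ forces all clauses to be satisfied \emph{inside} $F_\phi=F_r$ with no outstanding expectation, so the stored count equals the true number of satisfied clauses; maximizing over all $\Sigma^+,\Sigma^-$ yields the \maxsat optimum.

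The correctness proof is a straightforward adaptation of the induction in the proof of Theorem~\ref{thm:nsat}: the shape bookkeeping and the uniqueness argument (every $\nu$ uniquely determines $\Sigma_x^\pm,\Sigma_y^\pm$ via Proposition~\ref{pro:signature}\,b, and then the algorithm's conditions uniquely determine $\Pi_x^\pm,\Pi_y^\pm$) carry over unchanged, since they concern the shape indices, not the stored quantity. The runtime analysis is \emph{identical}: the table still has $\ca O(S(t)^4)$ entries, the internal-node step still loops over $6$-tuples with $\ca O(t^3)$ linear-algebra work per tuple, giving $\ca O(t^3\cdot S(t)^6)$ per node and, by Lemma~\ref{lem:subnum}, the claimed bound $\ca O(t^3\cdot 2^{3t(t+1)/2}\cdot|\phi|)$. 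The main obstacle I expect is the \emph{accounting discipline for satisfied clauses}: I must argue precisely that each clause is counted once and only once -- at the unique node where a genuine (non-expected) satisfying literal first meets it -- and that clauses satisfied purely ``by expectation'' are never counted prematurely but are inevitably counted once the complementary literal materializes. Establishing this clean charging scheme, and reconciling it with the freedom in choosing the expectation subspaces, is the crux that distinguishes the \maxsat argument from the multiplicative \numsat one.
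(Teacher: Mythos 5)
Your high-level plan (reuse the shape machinery, swap counting for optimization, identical complexity) matches the paper, but the quantity you chose to store does not work, and you yourself flag the failure point without resolving it. You store at $z$ the maximum number of clauses of $F_z$ satisfied by a true literal \emph{already inside} $F_z$, and you combine children by plain addition $\atable_x[\cdots]+\atable_y[\cdots]$. The problem is a clause $c\in V(F_x)$ that is unsatisfied inside $F_x$ but becomes satisfied at the join at $z$ by a true literal coming from $F_y$: it belongs to neither child's count, and your update has no term that adds it. Worse, no such term \emph{can} be computed from the table indices: the shape quadruples record only subspaces, not how many clauses carry a label non-orthogonal to $g^+(\Sigma_y^+)$, so ``the number of clauses newly satisfied at this join'' is simply not a function of the twelve subspaces you loop over. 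Taken literally, your initialization sets every leaf entry to $0$ (clause leaves and variable leaves alike) and every internal step adds zeros, so the algorithm returns $0$ on every input. The ``clean charging scheme'' you defer to at the end is exactly the missing idea, and it cannot be supplied within your formulation.

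The paper avoids this by optimizing the \emph{complementary} quantity. It defines a \emph{defective shape}: $\nu_1$ has defective shape $(\Sigma^+,\Sigma^-,\Pi^+,\Pi^-)$ if deleting some clause set $C_0$ makes it an honest shape, and the \emph{defect} is the minimum $|C_0|$ --- informally, the number of clauses that are neither satisfied locally nor expected to be satisfied from outside. The table stores the minimum defect per shape. This is chargeable once, at the clause's own leaf: a clause leaf gets defect $1$ exactly when both $\Pi^+$ and $\Pi^-$ are orthogonal to its label (it is written off as unsatisfied) and $0$ otherwise, and the defect is then genuinely additive across the join with the same derived $\Pi_x^\pm,\Pi_y^\pm$ formulas, because a clause of $F_x$ satisfied by a literal from $F_y$ has $g^+(\Sigma_y^+)$ folded into $\Pi_x^+$ and hence contributes $0$ to $\nu_x$'s defect. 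At the root with $\Pi^+=\Pi^-=\emptyset$ the defect equals the true number of unsatisfied clauses, and the answer is $|C|-m$. Your runtime analysis and the uniqueness argument for the $\Sigma$'s and $\Pi$'s do carry over, but you need to replace your max-of-satisfied-counts semantics by this min-defect semantics for the dynamic program to be correct.
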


In order to formulate this algorithm, we extend
Definition~\ref{def:signature} as follows.
Recall $V(F_\phi)=W\cup C$ where $W$ are the
variables and $C$ are the clauses of~$\phi$.
\begin{definition}\rm
Consider a $(t^+,t^-)$-labeling $\dbar F_1=(F_1,lab^+,lab^-)$ of a
signed subgraph $F_1\subseteq F_\phi$, 
and a partial assignment $\nu_1: V(F_1)\cap W\to\{0,1\}$.
We say that $\nu_1$ is an {\em assignment of defective shape}
$(\Sigma^+,\Sigma^-,\Pi^+,\Pi^-)$ in $\dbar F_1$ if
there exists a set $C_0\subseteq C\cap V(F_1)$ such that
$\nu_1$ is of shape $(\Sigma^+,\Sigma^-,\Pi^+,\Pi^-)$ in $\dbar F_1-C_0$.
The value (the {\em defect}) of $\nu_1$ with respect to this defective shape
is the minimum cardinality of such~$C_0$.
\end{definition}
Informally, the defect equals the number of clauses in $F_1$
which are unsatisfied there and not expected to be satisfied in
a complete assignment in $F_\phi$.

We process the parse tree $T_\phi$ of $F_\phi$ similarly to Algorithm~\ref{alg:nsat},
but this time the value of the entry
$\atable_z[\Sigma^+,\Sigma^-,\Pi^+,\Pi^-]$ will be equal to
the minimum defect over all partial assignments in $\dbar F_z$
that are of defective shape $(\Sigma^+,\Sigma^-,\Pi^+,\Pi^-)$.
Formally:

\begin{algorithm}[Theorem~\ref{thm:maxsat}]
\label{alg:maxsat}
Given is a CNF formula $\phi$ and
a signed $(t^+,t^-)$-labeling parse tree $T_\phi$ of the formula graph $F_\phi$.
\begin{enumerate}[1.]
\item
We initialize all entries of $\atable_z$ for $z\in V(T_\phi)$ to~$\infty$.
\smallskip\item
We process all nodes of $T_\phi$ in the leaves-to-root order as follows.

\smallskip
\begin{enumerate}[a)]\parskip1pt
\item\label{it:algm2a}
At a clause leaf $c$ of $T_\phi$,
we set ~$\atable_c[\emptyset,\emptyset,\Pi^+,\Pi^-]\leftarrow1$~
for all subspaces $\Pi^+,\Pi^-$ that are both orthogonal 
to the label vector of $\{1\}$,
and set ~$\atable_c[\emptyset,\emptyset,\Pi^+,\Pi^-]\leftarrow0$~ otherwise.
\item\label{it:algm2b}
At a variable leaf $\ell$ of $T_\phi$,
we set ~$\atable_\ell[\langle\{1\}\rangle,\emptyset,\Pi^+,\Pi^-]\leftarrow0$~
and ~$\atable_\ell[\emptyset,\langle\{1\}\rangle,\Pi^+,\Pi^-]\leftarrow0$~
for all pairs $\Pi^+,\Pi^-$.
\item\label{it:algm2c}
Consider an internal node $z$ of $T_\phi$, with the left son $x$ and the
right son $y$ such that $\atable_x$ and $\atable_y$ have already been
computed.

\begin{itemize}
\item
Let the composition operators at $z$ in the labeling parse trees
$T_\phi^+,T_\phi^-$ be as 
$\dbar F_z^+=\dbar F_x^+\pplus{g^+}{f_1^+,f_2^+}\dbar F_y^+$
and $\dbar F_z^-=\dbar F_x^-\pplus{g^-}{f_1^-,f_2^-}\dbar F_y^-$.%
\smallskip

\item
We loop exhaustively over all indices to $\atable_x,\atable_y,\atable_z$,
i.e. over all $12$-tuples of subspaces
$\Sigma_x^+,\Sigma_x^-,\Pi_x^+,\Pi_x^-$,
$\Sigma_y^+,\Sigma_y^-,\Pi_y^+,\Pi_y^-$,
$\Sigma_z^+,\Sigma_z^-,\Pi_z^+,\Pi_z^-$ of $\GFii^t$.
If all the following are true
\begin{itemize}\vskip2pt
\item[]
$\Sigma_z^+=\big\langle f_1^+(\Sigma_x^+)\cup f_2^+(\Sigma_y^+)\big\rangle$
and
$\Sigma_z^-=\big\langle f_1^-(\Sigma_x^-)\cup f_2^-(\Sigma_y^-)\big\rangle$,
\item[]
$\Pi_x^+=\big\langle f_1^+{}^T(\Pi_z^+)\cup g^+(\Sigma_y^+)\big\rangle$
and
$\Pi_x^-=\big\langle f_1^-{}^T(\Pi_z^-)\cup g^-(\Sigma_y^-)\big\rangle$,
\item[]
$\Pi_y^+=\big\langle f_2^+{}^T(\Pi_z^+)\cup g^+{}^T(\Sigma_x^+)\big\rangle$
and
$\Pi_y^-=\big\langle f_2^-{}^T(\Pi_z^-)\cup g^-{}^T(\Sigma_x^-)\big\rangle$,
\end{itemize}\vskip2pt

then we let
$m=\atable_x[\Sigma_x^+,\Sigma_x^-,\Pi_x^+,\Pi_x^-]+
	\atable_y[\Sigma_y^+,\Sigma_y^-,\Pi_y^+,\Pi_y^-]$.
If, furthermore, $m<\atable_z[\Sigma_z^+,\Sigma_z^-,\Pi_z^+,\Pi_z^-]$,
then we set
$\atable_z[\Sigma_z^+,\Sigma_z^-,$ $\Pi_z^+,\Pi_z^-]\leftarrow m$.
\end{itemize}
\end{enumerate}

\smallskip\item\label{it:algm3}
We find the minimum $m$ over all the entries 
$\atable_r[\Sigma^+,\Sigma^-,\emptyset,\emptyset]$
where $r$ is the root of $T_\phi$ and $\Sigma^+,\Sigma^-$ are arbitrary
subspaces of $\GFii^t$.
An optimal solution to \maxsat of $\phi$ then has $|C|-m$ satisfied clauses.
\end{enumerate}
\end{algorithm}

\begin{proof}[Algorithm~\ref{alg:maxsat}\,/\,Theorem~\ref{thm:maxsat},
	sketch]
The main task is to show by means of structural induction
that the algorithm correctly computes
in $\atable_z[\Sigma^+,\Sigma^-,\Pi^+,\Pi^-]$
the minimum defect over all partial assignments in $\dbar F_z$
that are of defective shape $(\Sigma^+,\Sigma^-,\Pi^+,\Pi^-)$.

Our proof proceeds in the same way as the proof of Algorithm~\ref{alg:nsat},
with a use of the following easy claim (notation as in the referred proof):
\begin{itemize}
\item
The defect of a partial assignment $\nu$ in $\dbar F_z$ w.r.t.\
$(\Sigma_z^+,\Sigma_z^-,\Pi_z^+,\Pi_z^-)$
equals the sum of the defects of $\nu_x$ in $\dbar F_x$ and $\nu_y$
in $\dbar F_y$ w.r.t.\ defective shapes
$(\Sigma_x^+,\Sigma_x^-,\Pi_x^+,\Pi_x^-)$ and
$(\Sigma_y^+,\Sigma_y^-,\Pi_y^+,\Pi_y^-)$, respectively.
\end{itemize}
The runtime analysis follows, too.
\qed\end{proof}

\section{Conclusions}

We have presented new FPT algorithms for the \numsat and \maxsat problems 
on formulas of bounded rank-width. 
Our algorithms are
single-exponential in rank-width and linear in the size of the formula,
and they do not involve any ``large hidden constants''.
This is a significant improvement over previous results, for several reasons.  
In the case of tree-width this follows from the fact that rank-width is much more
general than tree-width. If a graph has bounded tree-width it also has
bounded rank-width, but there are classes of graphs with arbitrarily high
tree-width and small rank-width (e.g. cliques, complete bipartite graphs,
or distance hereditary graphs). 

As for clique-width (which is bounded iff rank-width is bounded), we have
obtained two
significant improvements over the existing algorithms such as \cite{FMR07}.
Firstly, rank-width can be exponentially smaller than
clique-width, and therefore we obtain an exponential speed-up over the
existing algorithms in the worst case. 
Secondly, there is an FPT algorithm for computing
the rank-width of a graph (and the associated rank-decomposition) exactly, 
whereas in the case of clique-width 
we have to rely on an approximation by an exponential function of rank-width.

Finally, our paper shows that many of the recent ideas and tricks 
of parameterized algorithm design on graphs of bounded rank-width
smoothly translate to certain SAT-related problem instances
which may bring new inspiration to related research, too.

\bibliographystyle{abbrv}
\bibliography{satrw}

\end{document}